\newtheorem{theorem}{Theorem}
\newacro{mmd}[\textsc{mmd}]{maximum mean discrepancy}
\newacro{hsic}[\textsc{hsic}]{Hilbert-Schmidt independence criterion}
\newacro{hscic}[\textsc{hscic}]{Hilbert-Schmidt conditional independence criterion}
\newacro{rkhs}[\textsc{rkhs}]{reproducing kernel Hilbert space}
\newacro{anm}[\textsc{anm}]{additive noise model}
\newacro{sem}[\textsc{sem}]{structural equation model}
\newacro{dhsic}[{\small{$d-$}}\textsc{hsic}]{$d$-variable Hilbert-Schmidt independence criterion}
\newacro{kcit}[\textsc{kcit}]{kernel conditional independence test}
\newacro{kcipt}[\textsc{kcipt}]{kernel conditional independence permutation test}
\newacro{sdcit}[\textsc{sdcit}]{self-discrepancy conditional independence test}
\newacro{scm}[\textsc{scm}]{\textit{structural causal model}}
\newacro{kl}[\textsc{kl}]{Karhunen-Lo\`eve}
\newacro{se-t}[\textsc{se-t}]{squared-exponential $T$}
\newacro{se-c}[\textsc{se-c}]{squared-exponential $C$}
\newacro{gp}[\textsc{gp}]{Gaussian process}
\newacro{dag}[\textsc{dag}]{directed acyclic graph}
\newacro{cpdag}[\textsc{cpdag}]{completed partially directed acyclic graph}
\newacro{cme}[\textsc{cme}]{conditional mean embedding}
\newacro{cpt}[\textsc{cpt}]{conditional permutation test}
\newacro{ccm}[\textsc{ccm}]{convergent cross mapping}
\newacro{resit}[\textsc{resit}]{regression with subsequent independence test}
\newacro{shd}[\textsc{shd}]{structural Hamming distance}
\newacro{pcmci}[\textsc{pcmci}]{PC algorithm with momentary conditional independence}
\newacro{hflm}[\textsc{hflm}]{historical functional linear model}
\newacro{wgi}[\textsc{wgi}]{World Governance Indicator}
\newcommand\indep{\protect\mathpalette{\protect\independenT}{\perp}}
\def\independenT#1#2{\mathrel{\rlap{$#1#2$}\mkern2mu{#1#2}}}
\newcommand{\Y}{\mathbf{Y}}
\newcommand{\x}{\mathbf{x}}
\newcommand{\X}{\mathbf{X}}
\newcommand{\Z}{\mathbf{Z}}
\newcommand{\hsic}{\textsc{hsic}}
\newcommand{\dhsic}{\small{\emph{d-}}\textsc{hsic}}
\renewcommand{\P}{\mathbb{P}}
\renewcommand{\H}{\mathcal{H}}
\title{Kernel-based independence tests for causal structure learning on functional data}
\author[1]{Felix Laumann}
\author[2,3]{Julius von K\"ugelgen}
\author[2]{Junhyung Park}
\author[2]{Bernhard Sch\"olkopf}
\author[1]{Mauricio Barahona}
\affil[1]{Department of Mathematics, Imperial College London, London, UK}
\affil[2]{MPI for Intelligent Systems, T\"ubingen, Germany}
\affil[3]{Department of Engineering, University of Cambridge, Cambridge, UK}
\begin{document}
\everypar{\looseness=-1}

\maketitle

\begin{abstract}

Measurements of systems taken along a continuous functional dimension, such as time or space, are ubiquitous in many fields, from the physical and biological sciences to economics and engineering.
Such measurements can be viewed as realisations of an underlying smooth process sampled over the continuum. However, traditional methods for independence testing and causal learning are not directly applicable to such data, as they do not take into account the dependence along the functional dimension.
By using specifically designed kernels, we introduce statistical tests for bivariate, joint, and conditional independence for functional  variables.
Our method not only extends the applicability to functional data of the \ac{hsic} and its d-variate version ($d$-\textsc{hsic}), but also allows us to introduce a test for conditional independence by defining a novel statistic for the \ac{cpt} based on the \ac{hscic}, with    
optimised regularisation strength 
estimated through an evaluation rejection rate. 
Our empirical results of the size and power of these tests on synthetic functional data show good performance, and we then exemplify their application to several constraint- and regression-based causal structure learning problems, including both synthetic examples and real socio-economic data.

\end{abstract}

\section{Introduction} \label{sec:introduction}
Uncovering the causal relationships between measured variables, 
a discipline known as \textit{causal structure learning} or \textit{causal discovery}, is of great importance across various scientific fields, such as climatology~\citep{runge2019inferring}, economics~\citep{sulemana2018empirical}, or biology~\citep{finkle2018windowed}. Doing so from passively collected (`observational') data enables the inference of causal interactions between variables without performing experiments or randomised control trials, which 
are often expensive, unethical, or impossible to conduct \citep{glymour2019review}. %
Causal structure learning is the inference, under a given set of assumptions, of directed and undirected edges in graphs representing the data generating process, where the nodes represent variables and the inferred edges capture causal (directed) or non-causal (undirected) relationships between them.


Research in various areas collates \textit{functional} data consisting of multiple series of measurements observed conjointly over a given continuum (e.g., time, space, or frequency), where each series is assumed to be a realisation of an underlying \textit{smooth} process \citep[\S 3]{ramsay2005}.
By viewing the series of measurements as discretisations of functions, the observations are not required to be collected over regular meshes of points along the continuum.
If the variables are measured over time as the underlying continuum, there is a long history of methods that have been developed to infer (time-based) causality between variables. 
Among those, the classic Granger-causality~\citep{granger1969investigating} declares that variable `$X$ causes $Y$' ($X \to  Y$) if predicting the future of $Y$ becomes more accurate with, as compared to without, access to the past of $X$, conditional on all other relevant variables \citep{geweke1982measurement}.
However, these methods 
assume that the observed time-series are stationary and the causal dependency of $X$ on $Y$ is linear. 
More recently, \citet{sugihara2012detecting}~developed \acp{ccm}, a method that relaxes the assumption of linearity and finds causal relationships based on time-embeddings of the (stationary) time-series at each point. 
While useful in many situations, Granger-causality and \ac{ccm} can perform weakly when the time-series for $X$ and $Y$ are nonlinearly related or nonstationary, respectively
(see Appendix~\ref{app:linear_stationary}).

Here we present a method that uses kernel-based independence tests to detect statistically significant causal relationships by extending constraint- and regression-based causal structure learning to functional data. 
The key advantages over Granger-causality and \ac{ccm} are both the systematic consideration of confounders and the relaxation of assumptions around linear relationships or stationarity in the data, which can lead to different causal relationships between variables.
As a motivating example, consider the relationship between two variables: `corruption' and `income inequality', as measured by the \acp{wgi}~\citep{kaufmann2011worldwide} and the \citet{worldbank2022}, respectively. 
Using data for 
48 African countries from 1996 to 2016, \citet{sulemana2018empirical} investigated their cause-effect relationship 
and found that corruption `Granger-causes' income inequality. 
We have also confirmed independently that applying \ac{ccm} to the same data leads to the same conclusion.
However, by considering the time-series data as realisations of functions over time, and thus avoiding linearity and stationarity assumptions, our proposed kernel-based approach suggests the reverse result, i.e., causal influence of income inequality on corruption appears as the more statistically likely direction. 
Although a bidirectional causal dependency between these two variables might appear as more realistic, this conclusion is in agreement with other quantitative findings, which draw on different data sources \citep{jong2005comparative,alesina2005corruption,dobson2010there}. 
We will return to this example in Section~\ref{sec:wgi_data} where we analyse causal dependencies between all six \acp{wgi}.


Methodologically, our work extends the applicability of two popular paradigms in causal structure learning---\textit{constraint-based}~\citep[\S~5]{spirtes2000causation} and \textit{regression-based} methods~\citep{peters2014causal}---to functional data. 
Independence tests play a crucial role in uncovering causal relationships in both paradigms, and kernels provide a powerful framework for such tests by embedding probability distributions in \acp{rkhs} \citep[\S~2.2]{muandet2017kernel}.
Until now, however, related methods for causal learning had only been applicable 
to univariate and multivariate data, but not to functional data.
To address this limitation, we employ recently derived kernels over functions~\citep{wynne2022kernel} 
to widen the applicability of kernel-based independence tests to functional data settings. 
To test for conditional independence, we can then compute \ac{hscic}~\citep{park2020measure} 
in a \acl{cpt} (\ac{cpt})~\citep{berrett2020conditional}, and we 
propose a straightforward search to determine the optimised regularisation rate in \ac{hscic}. 

We structure our paper as follows. 
Section~\ref{sec:background} provides a brief overview of prior literature on functional data analysis, kernel-based independence tests, and causal structure learning methods. 
Section~\ref{sec:methods} presents the definition of a conditional independence test for functional data and its applicability to causal structure learning on such data. 
We then empirically analyse the performance of our independence tests and causal structure learning algorithms on synthetic and real-world data in Section~\ref{sec:experiments}. 
We conclude with a discussion in Section~\ref{sec:discussion}. 

Our main contribution lies in Section~\ref{sec:methods} where we propose a conditional independence test for functional data that combines a novel test statistic based on \ac{hscic} with \ac{cpt} to generate samples under the null hypothesis.
The algorithm also searches for the optimised regularisation strength $\lambda^*$ required to compute \ac{hscic}, by pre-test permutations to calculate an \emph{evaluation rejection rate}.
We also highlight the following secondary contributions:
\begin{itemize}
    \item In Section~\ref{sec:joint_experiments}, we extend the historical functional linear model \citep{malfait2003historical} 
    to the multivariate case $\{X_1, \dots, X_i, \dots, X_d\} \rightarrow Y$ for regression-based causal structure learning, and we show how a joint independence test can be used to verify candidate \acp{dag}~\citep[\S~5.2]{pfister2018kernel} that embed the causal structure of function-valued random variables. 
    This model has been contributed to the Python package \texttt{scikit-fda}~\citep{ramos2019scikit}.
    \item On synthetic data, we show empirically that our bivariate, joint and conditional independence tests achieve high test power, and that our causal structure learning algorithms outperform previously proposed methods. 
    \item Using a real-world data set (World Governance Indicators) we demonstrate how our method can yield insights into cause-effect relationships amongst socio-economic variables measured in countries worldwide.
    \item Implementations of our algorithms are made available at \url{https://github.com/felix-laumann/causal-fda/} in an easily usable format that builds on top of \texttt{scikit-fda} and \texttt{causaldag}~\citep{squires2018causaldag}.
\end{itemize}

\section{Background and Related Work}
\label{sec:background}

\subsection{Functional Data Analysis}
In functional data analysis~\citep{ramsay2005}, a variable $X$ is described by a set of $n$ samples (or realisations), $\X = \{x_i(s)\}_{i=1}^n$, where each functional sample $x_i(s)$ corresponds to a series of observations over the continuum $s$, also called the \textit{functional dimension}. Typical functional dimensions are time or space.
In practical settings, the observations are taken at a set of $S$ discrete values $s_1, \ldots, s_S$ of the continuum variable $s$.
Examples of functional data sets include the vertical position of the lower lip over time when speaking out a given word~\citep{malfait2003historical}, the muscle soreness over the duration of a tennis match~\citep{girard2006changes}, or the ultrafine particle concentration in air measured over the distance to the nearest motorway~\citep{zhu2006comparison}.

In applications, the functional samples are usually represented as linear combinations of a finite set of $M$ basis functions $\{\phi_m(s)\}_{m=1}^M$ (e.g., Fourier or monomial basis functions):
\begin{equation}
\label{eq:basis_representation}
    x_i(s) = \sum_{m=1}^M c_{i,m} \phi_m(s),
\end{equation}
where the coefficients $C_i = (c_{i,1}, c_{i,2}, \dots, c_{i,M})$ characterise each sample.
If the number of basis functions is equal to the number of observations, $M = S$, each observed value $x_i(s_k)$ can be fitted exactly by obtaining the coefficients $C_i$ using standard nonlinear least squares fitting techniques (provided the  $\phi_m(s)$ are valid basis functions), and Equation~\eqref{eq:basis_representation} allows us to interpolate between any two observations. 
When the number of basis functions is smaller than the number of observations, $M < S$, as it is commonly the case in practice, the basis function expansion~\eqref{eq:basis_representation} provides a smoothed approximation to the set of observations, $\hat{x}_i(s_k)$.

%
%

For the many applications where the continuum is time, \acp{hflm}~\citep{malfait2003historical} provide a comprehensive framework to map the relationship between two sets of functional samples. 
Let $0$ and $T$ be the initial and final time points for a set of samples $y_i(t)$. 
\acp{hflm} describe dependencies that can vary over time using the function $\beta(s, t)$, which encapsulates the influence of $x(s), s \in [0, S]$ on another variable $y(t), t \in [0, T]$ at any two points in time, $s_k$ and $t_k$:
\begin{equation}\label{eq:hist_data}
    y_i (t) = \int_{s_0(t)}^t x_i(s) \; \beta(s,t) \; ds \; , \quad t \in [0, T] 
\end{equation}
where $s_0(t)$ is the maximum allowable lag for any influence of $X$ on $Y$ and $s_0(t) \leq s \leq t$. 
Typical choices for $\beta(s,t)$ are exponential decay and hyperbolic paraboloid (or ``saddle'') functions.
The continuum is not required to be time but can also be space, frequency or others (see \citet{ramsay2017dynamic} for an extensive collection of function-to-function models and applications).

\subsection{Kernel Independence Tests}
Let $\mathcal{H}_{k_1}$ and $\mathcal{H}_{k_2}$ be separable \acp{rkhs} with kernels $k_1:\mathbb{R}^\mathbf{X}\times\mathbb{R}^\mathbf{X}\rightarrow\mathbb{R}$ and $k_2:\mathbb{R}^\mathbf{Y}\times\mathbb{R}^\mathbf{Y}\rightarrow\mathbb{R}$ such that the tensor product kernel $(k_1\otimes k_2) (x,y,x',y')= k_1(x,x')\cdot k_2(y,y')$ implies $(\mathbb{R}^\mathbf{X}\times\mathbb{R}^\mathbf{Y})\times(\mathbb{R}^\mathbf{X}\times\mathbb{R}^\mathbf{Y})\rightarrow\mathbb{R}$. 
If the kernel $k_1$ uniquely embeds a distribution $\P_{\X}$ in an \ac{rkhs} by a mean embedding,
\begin{equation}
    \mu_{\X} := \int_{\mathcal{X}} k_1(\x, \cdot) d\P_{\X}
\end{equation}
which captures any information about $\P_\X$, we call $k_1$ a characteristic kernel on $\mathbb{R}^\mathbf{X}\times\mathbb{R}^\mathbf{Y}$ \citep{fukumizu2007kernel}.
Characteristic kernels have thus been extensively used in bivariate ($\P_{XY} = \P_X \P_Y$), joint ($\P_{XYZ} = \P_X \P_Y \P_Z$) and conditional ($\P_{XY|Z} = \P_{X|Z} \P_{Y|Z}$) independence tests~\citep[e.g.,][]{gretton2008kernel, pfister2018kernel, zhang2012kernel}.

For the bivariate independence test, let $\P_{\X\Y}$ denote the joint distribution of $\X$ and $\Y$. Then 
\acs{hsic} is defined as
\begin{align}
    \hsic(\H_{k_1},  \H_{k_2}, \P_{\X \Y}) &:= \| \mu_{\X\Y} - \mu_{\X} \otimes \mu_{\Y} \|_{\H_{k_1} \otimes \H_{k_2}}^2 \geq 0, \\ 
    & \text{ with equality iff} \quad \P_{\X\Y} = \P_\X \P_\Y \;  . \nonumber
\end{align}
We refer to \citet{gretton2008kernel} for the definition of an estimator for finite samples that constitutes the test statistic in the bivariate independence test with null hypothesis $H_0 : X \indep Y$. The test statistic is then computed on the original data $(\X, \Y)$ and statistically compared to random permutations $\{(\X, \Y_{p'})\}_{p'=1}^P$ under the null hypothesis.

For distributions with more than two variables, let $\P_{\X_1, \X_2, \dots, \X_d}$ denote the joint distribution on $\X_1, \X_2, \ldots, \X_d$. To test for joint independence we compute 
\begin{align}
    \dhsic(\H_{k_1},  \H_{k_2}, \dots, &\H_{k_d}, \P_{\X_1, \X_2, \dots, \X_d}) := \\ 
    &\| \mu_{\X_1,\X_2,\dots,\X_d} - \mu_{\X_1} \otimes \mu_{\X_2} \otimes \dots \otimes \mu_{\X_d} \|_{\H_{k_1} \otimes \H_{k_2} \otimes \dots \otimes \H_{k_d}}^2 \geq 0 \nonumber \\
    &\text{with equality iff} \quad \P_{\X_1,\X_2, \dots, \X_d} = \P_{\X_1} \P_{\X_2} \cdots \P_{\X_d} \; .
    \nonumber
\end{align}
\citet{pfister2018kernel} derive a numerical estimator, which serves as the basis for a joint independence test on finite samples. Here, the distribution under the null hypothesis of joint independence is generated by randomly permuting all sample sets in the same way as $\Y$ is in the bivariate independence test.

Lastly, the conditional independence test relies on accurately sampling from the distribution under the null hypothesis $H_0 : X \indep Y | Z$. At the core of the 
\acl{cpt}~\citep{berrett2020conditional} lies a randomisation procedure that generates permutations of $\X$, denoted $\{\X_{p'}\}_{p'=1}^P$, which are generated without altering the conditional distribution $\P_{\X | \Z}$, so that
\begin{equation}
    \P_{\X_{p'}, \Y, \Z} = \P_{\X, \Y, \Z}
\end{equation}
under $H_0$ while breaking any dependence between $X$ and $Y$.
The null distribution can therefore be generated by repeating this procedure multiple times, and we can decide whether $H_0$ should be rejected by comparing a test statistic on the original data against its results on the generated null distribution.

The existing literature on kernel-based independence tests is extensive, see e.g., \citet{berrett2020conditional} for a relevant review, but only a small part of those tests investigates independence among functional variables~\citep{lai2021testing,gorecki2020independence}. 
There have been particularly strong efforts in developing conditional independence tests and understanding their applicable settings. The \ac{kcit}~\citep{zhang2012kernel}, for example, gave promising results in use with univariate data but increasingly suffered when the number of conditional variables was large. 
In contrast, the \ac{kcipt}~\citep{doran2014permutation} repurposed the well-established kernel two-sample test~\citep{gretton2012kernel} to a conditional independence setting which delivered stable results for multiple conditional variables. 
However, \citet{lee2017self} pointed out that as the number of permutations increases, while its power increases, its calibratedness decreases. 
This issue was overcome by their proposed \ac{sdcit}, which is based on a modified unbiased estimate of the \ac{mmd}.

\subsection{Causal Structure Learning}
The aim of causal structure learning, or causal discovery, is to infer the qualitative causal relationships among a set of observed variables, typically in the form of a causal diagram or \ac{dag}.
Once learnt, such a causal structure can then be used to construct a causal model such as a causal Bayesian network or a \ac{scm}~\citep{pearl2009causality}.
Causal models are endowed with a notion of manipulation and, unlike a statistical model, do not just describe a single distribution, but many distributions indexed by different interventions and counterfactuals.
They can be used for causal reasoning, that is, to answer causal questions such as computing the average causal effect of a treatment  on a given outcome variable. Such questions are of interest across many disciplines, and causal discovery is thus a highly topical area. We refer to~\citet{mooij2016distinguishing,peters2017elements,glymour2019review,scholkopf2022statistical,squires2022causal,vowels2022d} for comprehensive surveys and accounts of the main research concepts.
In particular, we focus here on causal discovery methods for causally sufficient systems, for which there are no unobserved confounders influencing two or more of the observed variables. 
Existing causal discovery methods can roughly be categorised into three families: 
\begin{itemize}
    \item  Score-based approaches assign a score, such as a penalised likelihood, to each candidate graph and then pick the highest scoring graph(s).
    A common drawback of score-based approaches is the need for a combinatorial enumeration of all \acp{dag} in the optimisation, although greedy approaches have been proposed to alleviate such issues~\citep{chickering2002optimal}.
    \item  Constraint-based methods start by characterising the set of \textit{conditional independences} in the observed data~\citep{spirtes2000causation}. They then determine the graph(s) consistent with the detected conditional independences by using a graphical criterion called d-separation, as well as the causal Markov and faithfulness assumptions, which establish a one-to-one connection between d-separation and conditional independence (see Appendix~\ref{app:defs} for definitions).
    When only observational i.i.d.\ data are available, this yields a so-called Markov equivalence class, possibly containing multiple candidate graphs. For example, the graphs $X\to Y\to Z$, $X \leftarrow Y \leftarrow Z$, and $X \leftarrow Y \to Z$ are Markov equivalent, as they all imply $X\indep Z | Y$ and no other conditional independence relations.
    \item Regression-based approaches directly fit the structural equations $X_i := f_i(\mathbf{PA}_i,U_i)$ of an underlying \ac{scm} for each $X_i$, where $\mathbf{PA}_i$ denote the parents of $X_i$ in the causal \ac{dag} and $U=(U_1, ..., U_n)$ are \textit{jointly independent} exogenous noise variables.
    Provided that the function class of the $f_i$ is sufficiently restricted, e.g., by considering only linear relationships~\citep{shimizu2006linear} or additive noise models~\citep{hoyer2008nonlinear}, the true causal graph is identified as the unique choice of parents for each $i$ such that the resulting residuals $\hat U_i=X_i-\hat f_i(\mathbf{PA}_i)$ are jointly independent.
\end{itemize}
As can be seen from these definitions, conditional, bivariate, and joint independence tests are an integral part of constraint- and regression-based causal discovery methods. 
Our main focus in the present work is therefore to extend the applicability of these causal discovery frameworks to functional data by generalising the underlying independence tests to such domains.



\section{Methods} \label{sec:methods}
In all three of our independence tests (bivariate, joint, conditional), we employ kernels over functions, also known as \ac{se-t} kernels~\citep{wynne2022kernel}.
Let $\mathcal{X}$ and $\mathcal{Y}$ be real, separable Hilbert spaces with norms $\| \cdot \|_{\mathcal{X}}$ and $\| \cdot \|_{\mathcal{Y}}$, respectively.
Then, for $T: \mathcal{X} \rightarrow \mathcal{Y}$, \ac{se-t} kernels are defined as
\begin{equation}
    k_T(x,y) = e^{-\frac{1}{2 \sigma^2} \| T(x) - T(y) \|_{\mathcal{Y}}^2} \; .
\end{equation}
where $\sigma^2$ is commonly defined as the median heuristic, $\sigma^2 = \text{Median}\{ \| T(a) - T(b) \|_{\mathcal{Y}}^2 : a,b \in \{ x_i \}_{i=1}^n \cup \{ y_i \}_{i=1}^n, a \neq b \}$. 
Replacing any characteristic kernel by the \ac{se-t} kernels for bivariate independence tests (based on \ac{hsic}) and for joint independence tests (based on \ac{dhsic}) is straightforward and does not require further theoretical investigation besides evaluating numerically the validity and power of the tests (see Section~\ref{sec:experiments}). However, the application of \ac{se-t} kernels to conditional independence tests needs further theoretical results, as we discuss now.

\subsection{Conditional independence test on functional data}

We consider the conditional independence test, which generates samples under the null hypothesis based on the \ac{cpt}, and uses the sum of \acp{hscic} over all samples $z \in \Z$ as its test statistic. 
The \ac{cpt} defines a permutation procedure that preserves the dependence of $Z$ on both $X$ and $Y$ while resampling data for $X$ that eliminates any potential dependence between $X$ and $Y$. 
This procedure results in samples according to the null hypothesis, $X \indep Y | Z$. 
We use this procedure whenever permutations are required as part of the conditional independence test. 
Given the computation of the \ac{hscic} is based on a kernel ridge regression, it requires to set a regularisation strength $\lambda$, which has been manually chosen previously~\citep{park2020measure}. 
Generally, our aim is to have type-I error rates close to the allowable false positive rate $\alpha$. 
However, choosing $\lambda$ inappropriately may result in an invalid test (type-I error rates exceed $\alpha$ if $\lambda$ is chosen too large), or in a deflated test power (type-I error rates are well below $\alpha$ and type-II error rates are high if $\lambda$ is chosen too small). 
Thus, we must define an algorithm that conducts a search over a range of potentially suitable values for $\lambda$ and assesses each candidate value by, what we will call, an \textit{evaluation rejection rate}.

The search proceeds by iterating over a range of values $\{\lambda_l\}_{l=1}^L$ to find the optimised value $\lambda^*$, as follows. 
For each $\lambda_l$, we start by producing one permutation of the samples $\X$, which we denote as $\X_b$, and we compute its corresponding \emph{evaluation test statistic} given by the sum of \acp{hscic} over all samples $z \in \Z$.
We then apply the usual strategy for permutation-based statistical tests: we produce an additional set of $P$ permuted sample sets of $\X$, which we denote $\{\X_{\pi}\}_{\pi=1}^P$, and for each $\X_{\pi}$, we determine the sum of \acp{hscic} over $z \in \Z$ to generate a distribution over statistics under the null hypothesis, which we call the \textit{evaluation null statistics}. 
We then compute the percentile where the evaluation test statistic on $\X_b$ falls within the distribution of evaluation null statistics on the permutation set $\{\X_{\pi}\}_{\pi=1}^P$. 
This results in an evaluation p-value which is compared to the allowable false positive rate $\alpha$ to establish
if the hypothesis $H_0: \X_b \not\indep \Y|\Z$ can be rejected. 
Given that both the evaluation test statistic and the evaluation null statistics are computed on conditionally independent samples, we repeat this procedure for $b=1, \ldots, B\geq100$ times to estimate an \emph{evaluation rejection rate} for each value of  $\lambda_l$.
Having completed this procedure over all values $\{\lambda_l\}_{l=1}^L$, we select the $\lambda_l$ that produces an evaluation rejection rate closest to $\alpha$ as the optimised regularisation strength, $\lambda^*$. 
Finally, we apply a \ac{cpt}-based conditional independence test using the optimised $\lambda^*$ to test the null hypothesis $H_0: X \not\indep Y | Z$. 
This entire procedure is summarised in Algorithm~\ref{alg:search_algorithm}. 

\begin{algorithm}
\caption{Search for $\lambda^*$ with subsequent conditional independence test}\label{alg:search_algorithm}
\begin{algorithmic}[1]
\Require{Samples ($\X, \Y, \Z$), Range $\{\lambda_l\}_{l=1}^L$, Significance level $\alpha$, Permutation iterations $P$, Rejection iterations $B$}
\Ensure{$\text{evaluation rejection rate}[\lambda^*] = 0$}
\For{$1 \leq l \leq L$}  \Comment{\textbf{Start:} Search for $\lambda^*$}
\For{$1 \leq b \leq B$}
\State Permute $\X$ by \ac{cpt}, call them $\X_b$
\State $\text{evaluation test statistic}[b] \gets \sum_{z \in \Z}\ac{hscic} (\X_b, \Y, \Z, \lambda_l)$
\For{$1 \leq \pi \leq P$}   
\State Permute $\X_b$ by \ac{cpt}, call them $\X_{b,\pi}$
\State $\text{evaluation null statistics}[b,\pi] \gets \sum_{z \in \Z}\ac{hscic} (\X_{b,\pi}, \Y, \Z, \lambda_l)$
\EndFor 
\State $\text{evaluation p-value} \gets 1-$ 
\State \qquad \qquad $\text{percentile}(\text{evaluation test statistic}[b], \text{evaluation null statistics}[b,:])$
\If{$\text{evaluation p-value} \geq \alpha$}
\State Fail to reject $H_0: \X_b \indep \Y|\Z$
\State $\text{rejects}[b]=0$
\Else
\State Reject $H_0: \X_b \not\indep \Y|\Z$
\State $\text{rejects}[b]=1$
\EndIf
\EndFor
\State $\text{evaluation rejection rate}[\lambda_l] \gets \text{mean}_{1 \leq b \leq B} (\text{rejects})$
\If{$|\text{evaluation rejection rate}[\lambda^*] - \alpha| \geq |\text{evaluation rejection rate}[\lambda_l] - \alpha|$}
\State $\lambda^* \gets \lambda_l$  
\EndIf
\EndFor
\State \textbf{return} $\lambda^*$  
\Comment{\textbf{End:} Search for $\lambda^*$}
\vspace{.3in}
\State $\text{test statistic} \gets \sum_{z \in \Z}\ac{hscic}(\X, \Y, \Z, \lambda^*)$   \Comment{\textbf{Start:} Conditional independence test}
\For{$1 \leq p' \leq P$}   
\State Permute $\X$ by \ac{cpt}, call them $\X_{p'}$
\State $\text{null statistics}[p'] \gets \sum_{z \in \Z}\ac{hscic} (\X_{p'}, \Y, \Z, \lambda^*)$
\EndFor 
\State $\text{p-value} \gets 1- \text{percentile}(\text{test statistic}, \text{null statistics})$ 
\If{$\text{p-value} \geq \alpha$}
\State Fail to reject $H_0: X \indep Y|Z$
\Else
\State Reject $H_0: X \not\indep Y|Z$
\EndIf  \Comment{\textbf{End:} Conditional independence test}
\end{algorithmic}
\end{algorithm}

The following Theorem~\ref{th:hscic_consistent} guarantees the consistency of the conditional independence test in Algorithm~\ref{alg:search_algorithm} with respect to the regularisation parameter $\lambda$.
\begin{theorem} \label{th:hscic_consistent}
		Let \(\mathcal{H}_{k_1}\) and \(\mathcal{H}_{k_2}\) be separable RKHSs with kernels \(k_1:\mathbb{R}^\mathbf{X}\times\mathbb{R}^\mathbf{X}\rightarrow\mathbb{R}\) and \(k_2:\mathbb{R}^\mathbf{Y}\times\mathbb{R}^\mathbf{Y}\rightarrow\mathbb{R}\) such that the tensor product kernel \(k_1\otimes k_2:(\mathbb{R}^\mathbf{X}\times\mathbb{R}^\mathbf{Y})\times(\mathbb{R}^\mathbf{X}\times\mathbb{R}^\mathbf{Y})\rightarrow\mathbb{R}\) is a characteristic kernel on \(\mathbb{R}^\mathbf{X}\times\mathbb{R}^\mathbf{Y}\)~\citep{szabo2017characteristic}. If the regularisation parameter \(\lambda=\lambda_n\) decays as \(n\rightarrow\infty\) at a slower rate than \(n^{-1/2}\), then the test based on the test statistic in Algorithm~\ref{alg:search_algorithm}  (lines 19--27) is consistent. 
\end{theorem}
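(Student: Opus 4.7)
The plan is to combine three ingredients: (i) consistency of the plug-in HSCIC estimator at the stated rate on $\lambda_n$, (ii) identifiability of conditional independence via the characteristic property of $k_1 \otimes k_2$, and (iii) a standard permutation argument applied to the CPT-generated null distribution. Concretely, I would show that under $H_1$ the (scaled) test statistic converges to a strictly positive limit while all $P$ permuted statistics vanish, whereas under $H_0$ exchangeability of $T_n$ with its permuted copies gives finite-sample level control.

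First, I invoke the conditional-mean-embedding consistency result of \citet{park2020measure}: whenever $\lambda_n \to 0$ with $\lambda_n\sqrt{n} \to \infty$, the kernel-ridge-regression CME converges in $\mathcal{H}_{k_1}\otimes\mathcal{H}_{k_2}$ in probability, and hence $\widehat{\text{HSCIC}}_{\lambda_n}(X,Y\mid Z{=}z) \xrightarrow{P} \text{HSCIC}(X,Y\mid Z{=}z)$ pointwise in $z$. The rate hypothesis of the theorem is precisely this condition. Using the uniform boundedness of the SE-T kernels and dominated convergence, I upgrade this to $\tfrac{1}{n}\sum_{i=1}^{n}\widehat{\text{HSCIC}}_{\lambda_n}(X,Y\mid Z{=}z_i) \xrightarrow{P} \mathbb{E}_Z[\text{HSCIC}(X,Y\mid Z)]$. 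Because $k_1\otimes k_2$ is characteristic, $\text{HSCIC}(X,Y\mid Z{=}z)=0$ for $P_Z$-almost every $z$ iff $X \indep Y \mid Z$, so the population limit of the scaled test statistic is zero under $H_0$ and strictly positive under $H_1$.

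For the permutation side, every CPT resample $\X_{p'}$ satisfies $\X_{p'} \indep \Y \mid \Z$ by construction (the defining property of the CPT in \citet{berrett2020conditional}). Applying the previous argument to $(\X_{p'},\Y,\Z)$ gives $\tfrac{1}{n}T_n^{(p')}\xrightarrow{P} 0$, and a union bound over the fixed number $P$ of permutations yields $\max_{p'} T_n^{(p')} = o_P(n)$, while $T_n = n\,\mathbb{E}_Z[\text{HSCIC}(X,Y\mid Z)] + o_P(n)$ under $H_1$. Consequently, the empirical p-value concentrates at $0$, so $\Pr(\hat{p}_n < \alpha) \to 1$, which is consistency of power. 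Under $H_0$, $T_n$ is exchangeable with the $T_n^{(p')}$, so the permutation-test argument of \citet{berrett2020conditional} delivers asymptotic (in fact finite-sample) level $\alpha$.

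The main obstacle is transporting the HSCIC consistency rate of \citet{park2020measure} to the functional-data regime. Two structural points need care: (a) $k_1 \otimes k_2$ built from SE-T kernels must be characteristic on the product of the infinite-dimensional input Hilbert spaces, which \citet{wynne2022kernel} guarantee when $T$ is injective and the base RBF kernel is characteristic on $\mathcal{Y}$, and (b) the conditional mean embedding must lie in the relevant RKHS, a regularity assumption on $P_{XYZ}$ that underlies the CME convergence rate. A secondary technicality is that the CPT itself relies on an estimate of $P_{X\mid Z}$, so the resampling error must be dominated by the kernel-ridge-regression error; the same rate regime $\lambda_n\sqrt{n}\to\infty$, combined with the exchangeability guarantees of \citet{berrett2020conditional}, ensures this negligibility.
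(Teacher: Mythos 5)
Your proposal is correct and its core is the same as the paper's: both rest on \citet{park2020measure} for convergence of the regularised conditional mean embeddings under the condition \(\lambda_n\to 0\) with \(\lambda_n\sqrt{n}\to\infty\) (hence convergence of the empirical HSCIC), a law-of-large-numbers step to pass from pointwise convergence to the averaged statistic, and the characteristic property of \(k_1\otimes k_2\) (Theorem 5.4 of \citealp{park2020measure}) to conclude that the population limit is strictly positive exactly under \(H_1\), so that the statistic diverges and the test rejects with probability tending to one. The differences are in the supporting steps: the paper passes from pointwise to averaged convergence via a uniform law of large numbers over the function class in which the estimation is carried out and closes the argument with Portmanteau's theorem applied to the \(n^{-1/2}\)-scaled statistic, whereas you use boundedness of the SE-T kernels plus dominated convergence and an \(n^{-1}\) scaling; both are serviceable. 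More notably, you make explicit two things the paper's proof leaves implicit: the behaviour of the CPT-generated null statistics (showing the permuted copies are \(o_P(n)\) so the empirical p-value concentrates at zero) and the finite-sample level control under \(H_0\) via exchangeability, together with the caveat that CPT only approximately preserves \(\P_{\X|\Z}\) since that conditional must be estimated. The paper's proof establishes only the divergence of the test statistic under \(H_1\) and does not analyse the permutation null at all, so your treatment is, if anything, more complete on the testing side; conversely, you should be aware that the exchangeability claim is exact only when \(\P_{\X|\Z}\) is known, and with an estimated conditional one inherits the total-variation error bound of \citet{berrett2020conditional} rather than exact level \(\alpha\) — a point neither the paper nor your sketch fully resolves, but which affects validity rather than the consistency claim of the theorem.
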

\begin{proof}
See Appendix~\ref{app:proof}. 
\end{proof}

\subsection{Causal structure learning on functional data}

To infer the existence of (directed) edges in an underlying causal graph $G$ based on the joint data distribution $\P$ over a set of observed variables, we must assume that $G$ and $\P$ are intrinsically linked.
\citet[\S~2.3.3]{spirtes2000causation} define the \emph{faithfulness} assumption, from which it follows that if two random variables are (conditionally) independent in the observed distribution $\P$, then they are d-separated in the underlying causal graph $G$ \citep[\S~1.2.3]{pearl2009causality}.
Using this fact, constraint-based causal structure learning methods \citep[\S~5]{spirtes2000causation} take advantage of bivariate and conditional independence tests to infer whether two nodes are d-separated in the underlying graph.
These methods yield \acp{cpdag}, which are graphs with undirected and/or directed edges. 
In contrast, regression-based methods \citep{peters2014causal}, utilise the joint independence test to discover \acp{dag} that have all edges oriented. 
We now describe both of these approaches in more detail.

\subsubsection{Constraint-based causal structure learning}

Constraint-based causal structure learning relies on performing conditional independence tests for each pair of variables, $X$ and $Y$, conditioned on any possible subset of the remaining variables, that is, any subset within the power set of the $d \setminus \{X, Y\}$ remaining variables. 
Therefore, for $d$ variables we need to carry out $2^{d-2}$ conditional independence tests for every pair of variables, which results in $\binom{d}{2} 2^{d-2}$ tests in total.

Conditional dependences found in the data can then be used to delete and orient edges in the graph $G$, as follows.
We start with a complete graph with $d$ nodes, where each node corresponds to one of the variables. 
The edge connecting $X$ and $Y$ is deleted if there exists a subset $\mathcal{Z}_1$ of the $d-2$ remaining variables such that 
$X \indep Y | \mathcal{Z}_1$.
If we then find another subset $\mathcal{Z}_2$, such that $Z_* \in \mathcal{Z}_2$, $\mathcal{Z}_1 = \mathcal{Z}_2 \setminus \{Z_*\}$, and for which $X \not\indep Y | \mathcal{Z}_2$, we can orient the edges $X-Z_*$ and $Y-Z_*$ to form colliders (or v-structures), $X \rightarrow Z_* \leftarrow Y$.

Based on these oriented edges, `Meek's orientation rules' defined by \citet{meek1995complete} \citep[see e.g.,][\S~7.2.1]{peters2017elements} can be applied to direct additional edges based on certain graphical compositions, as shown in Appendix~\ref{app:alg+meek_rules}.
Briefly, these rules follow from the avoidance of cycles, which would violate the acyclicity of a \ac{dag}, and colliders, which violate the found conditional independence. 
Algorithms that implement these criteria are the SGS algorithm and the more efficient PC algorithm~\citep{spirtes2000causation}, which we summarise in Appendix~\ref{app:alg+meek_rules}.

\subsubsection{Regression-based causal structure learning}
To carry out our regression-based causal learning, we choose
\acp{anm}, a special class of \acsp{scm} where the noise terms are assumed to be independent and additive. 
This assumption guarantees that the causal graph can be identified if the function $f_i$ is nonlinear with Gaussian additive noise \citep{hoyer2008nonlinear,peters2012identifiability}, a typical setting in functional data. 
Our model over the set of random variables 
$\{X\}_{i=1}^d$
is given by:
\begin{equation}
    \X_i := f_i(\mathbf{PA}_i) + U_i 
    \; ,
\end{equation}
where the additive noise terms $U_i$ are jointly independent, i.e., any noise term $U_i$ is independent of any intersection of the other noise terms, 
\begin{equation*}
    \P(\cap_{j=1}^k U_{i_j}) = \prod_{j=1}^k \P(U_{i_j}), \quad
    \forall \ k \leq d .
\end{equation*}
Based on these assumptions, we follow \textsc{resit} \citep[Regression with Subsequent Independence Testing;][]{peters2014causal}, an approach to causal discovery that can be briefly described as follows. 
If we only have two variables $X$ and $Y$, we start by assuming that the samples in $\Y$ are the effect and the samples in $\X$ are the cause. Therefore we write $\Y$ as a function $\hat{f}_\Y$ of $\X$ plus some noise, and we test whether the residual $r_\Y = \Y - \hat{f}_\Y(\X)$ is independent of $\X$. 
We then exchange the roles of assumed effect and assumed cause to obtain the residual $r_\X = \X - \hat{f}_\X(\Y)$, which is tested for independence from $\Y$.
To overcome issues with finite samples and test power, \citet{peters2014causal} used the p-value of the independence tests as a measure of strength of independence, which we follow in our experiments in Section~\ref{sec:experiments}. 
Alternatively, one may determine the causal direction for this bivariate case by comparing both cause-effect directions with respect to a score defined by \citet{buhlmann2014cam}.

For $d$ variables, the joint independence test replaces the bivariate independence test. 
Firstly, we consider the set of candidate causal graphs, $\Gamma = \{G_1, \dots, G_\gamma, \dots, G_\Gamma\}$, which contains every potential 
\ac{dag} with $d$ nodes. 
For each candidate \ac{dag}, $G_\gamma$, we regress each descendant $\X_i$ on its parents $\mathbf{PA}_i$:
\begin{equation}\label{eq:add_noise_model}
    \X_i := \sum_{k \in \mathbf{PA}_i} \hat{f}_{i, k}(\X_k) + U_i \; , \qquad i \in {1, \dots, d} \; ,
\end{equation}
and we compute the residuals $r_{\X_i} = \X_i - \sum_{k \in \mathbf{PA}_i} \hat{f}_{i, k}(\X_k)$. We then apply the joint independence test to all $d$ residuals. 
The candidate \ac{dag} $G_\gamma$ is accepted as the `true' causal graph if the null hypothesis of joint independence amongst the residuals is not rejected. 
This process is repeated over all candidate causal graphs in the set $\Gamma$.
Again, because in finite samples this procedure may not always lead to only one candidate \ac{dag} being accepted, the one with the highest p-value is chosen~\citep{peters2014causal}.

\section{Experiments} \label{sec:experiments}

In practice, functional data analysis requires samples of continuous functions evaluated over a mesh of discrete observations. 
In our synthetic data, we consider the interval $s = [0, 1]$ for our functional dimension and draw 100 equally spaced points over $s$ to evaluate the functions. 
For real-world data, we map the space in which the data live (e.g., the years 1996 to 2020) to the interval $s = [0, 1]$ and interpolate over the discrete measurement points. We then use this interpolation to evaluate the functional samples on 100 equally spaced points.
Unless otherwise mentioned, henceforth we use \ac{se-t} kernels with $T = I$ where $I$ is the identity matrix and $\sigma^2$ as the median heuristic, as previously defined in Section~\ref{sec:methods}.

\subsection{Evaluation of independence tests for functional data} 
\label{sec:independence_tests}

Before applying our proposed independence tests to causal structure learning problems, we use appropriate synthetic data to evaluate their type-I error rate (test size) and type-II error rate (test power). 
Type-I errors are made when the samples are independent (achieved by setting $a=0$ in our experiments below) but the test concludes that they are dependent, i.e., the test falsely rejects the null hypothesis $H_0$. 
In contrast, type-II errors appear when the samples are dependent but the test fails to reject $H_0$ (achieved by setting $a>0$).
Specifically, we compute the error rates on 200 independent trials, which correspond to different random realisations of the data sets, i.e., synthetically generated data sets with random Fourier basis coefficients, random coefficients of the $\beta$-functions and random additive noise. 
We set the test significance level at $\alpha = 0.05$ and approximate the null hypothesis of independence by 1000 permutations using the \ac{cpt} scheme described above. 
%
Note that although our synthetic data are designed to replicate typical behaviour in time-series, the independence tests are not limited to time as the functional dimension, and can be used more generally.

\subsubsection{Bivariate independence test}\label{sec:marg_experiments}

We consider $n$ functional samples $\{x_i(s)\}_{i=1}^n$ of a random variable $X$ defined over the interval $s = [0, 1]$. 
To generate our samples, we sum $M = 3$ Fourier basis functions $\phi_{m,T} (s)$ with period $T = 0.1$ and coefficients $c_{i,m}$ randomly drawn from a standard normal distribution:
\begin{equation} \label{eq:func_data}
    x_i (s) = \sum_{m=1}^M c_{i,m} \phi_{m,T} (s) \; ,
\end{equation}
where the Fourier functions are $\phi_{1,T}(s) = 1$, $\phi_{2, T} (s) = \sqrt{\frac{2}{T}} \sin \left(\frac{2\pi m}{T}s \right)$ and $
    \phi_{3, T} (s) = \sqrt{\frac{2}{T}} \cos \left(\frac{2\pi m}{T}s \right)$. 
To mimic real-world measurements, we draw each sample over a distinct, irregular mesh of evaluation points. We then interpolate them with splines, and generate realisations over a regular mesh of points. 
Multivariate random noise $\epsilon_i^X (s) \sim \mathcal{N}(0, I)$, where $I$ is the identity matrix, is then added.
The samples $\{y_i(t)\}_{i=1}^n$ of random variable $Y$ are defined as a \ac{hflm} \citep{malfait2003historical} over $t = [0, 1]$ by: 
\begin{equation}\label{eq:hist_data}
    y_i (t) = a \int_0^t x_i(s) \; \beta(s,t) \; ds \; ,
\end{equation}
where $a \in [0, 1]$. 
The function $\beta (s, t)$ maps the dependence from $x_i$ at any point $s$ to $y_i$ at any point $t$, and is defined here as a hyperbolic paraboloid function:
\begin{align}
\beta(s,t) = 8 (s - c_1)^2 - 8 (t - c_2)^2,
\label{eq:beta_fun}
\end{align}
with coefficients $c_1$ and $c_2$ drawn independently from a uniform distribution over the interval  $[0.25,0.75]$.
Afterwards, the samples $y_i(t)$ are evaluated over a regular mesh of 100 points $t \in [0, 1]$ and random noise $\epsilon_i^Y(t) \sim \mathcal{N}(0, I)$ is added. 
Clearly, for $a=0$ our samples are independent, as the samples $y_i(t)=\epsilon_i^Y(t)$ are just random noise. 
As $a$ is increased, the dependence between the samples $x_i(s)$ and $y_i(t)$ becomes easier to detect. 
Figure~\ref{fig:bivariate_indep_results} shows that our test stays within the allowable false-positive rate $\alpha$ and detects the dependence as soon as $a > 0$, even with a low number of samples.

\subsubsection{Joint independence test} \label{sec:joint_experiments}

To produce the synthetic data for joint independence tests, we first generate random \acp{dag} with $d$ nodes using an Erd\"{o}s-R\'enyi model with density 0.5 
\citep[\S~5.2]{pfister2018kernel}. 
For each \ac{dag}, we first use Equation~\eqref{eq:func_data} to produce function-valued random samples for the variables $X^i$ without parents. 
We then generate samples for other variables using a historical functional linear model:  
\begin{equation}\label{eq:multi_hist_data}
    x_i^j (t) = a \sum_{p \in \mathbf{PA}^j} \int_0^t x_i^p(s) \; \beta^p(s,t) \; ds \; ,
\end{equation}
where $\mathbf{PA}^j$ are the parents of node $X^j$, and the function $\beta^p(s,t)$ is given by Equation~\eqref{eq:beta_fun} with random coefficients $c_1$ and $c_2$ independently generated for each descendant-parent pair indexed by $p$.
After being evaluated at a regular mesh of 100 points within $t \in [0, 1]$, random noise $\epsilon_i^j(t) \sim \mathcal{N}(0,I)$ is added.
Note that, again, an increase in the factor $a \in [0,1]$ should make the dependence structure of the \ac{dag} easier to detect. 
Figure \ref{fig:joint_indep_results} shows the test size where $a=0$, resulting in independent variables, and test power where $a>0$. 
We evaluate the joint independence test for $d=4$ variables over various values of $a$ and a range of sample sizes.

\begin{figure}[htb]
    \centering
    \begin{subfigure}{.45\textwidth}
    \includegraphics[width=\linewidth]{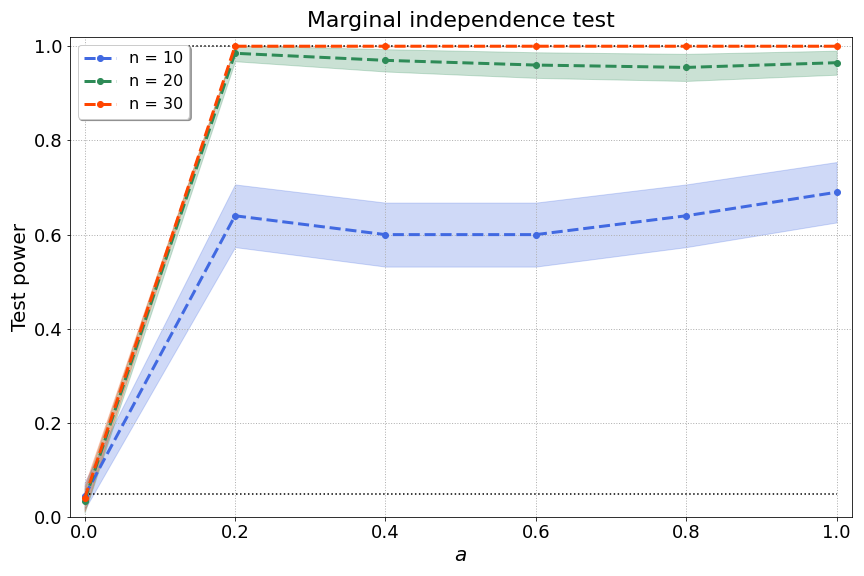}
    \caption{} \label{fig:bivariate_indep_results}
    \end{subfigure}\hspace*{5ex}
    \begin{subfigure}{0.45\textwidth}
    \includegraphics[width=\linewidth]{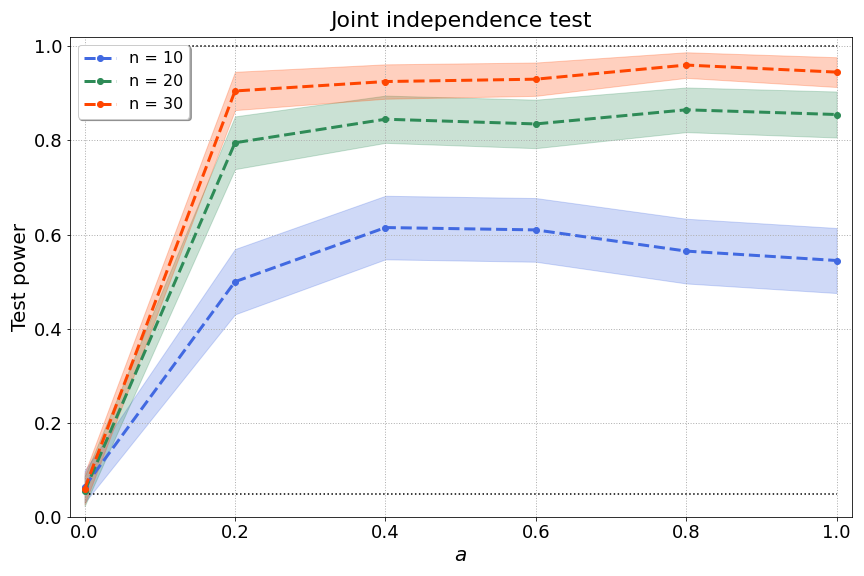}
    \caption{} \label{fig:joint_indep_results}
    \end{subfigure}
    \caption{(a) Bivariate and (b) joint independence tests over various values of the dependence factor $a$ and sample size $n$. The joint independence test is conducted with $d = 4$ variables.}
    \label{fig:bivariate_joint_indep}
\end{figure}

\subsubsection{Conditional independence test} \label{sec:cond_experiments}

To evaluate whether $X$ is independent of $Y$ given $\mathcal{Z}$, where $\mathcal{Z}$ may be any subset within the power set of the $d \setminus \{X, Y\}$ remaining variables, we generate data samples according to:
\begin{equation}
\begin{aligned} \label{eq:data_cond_indep}
    z_i^j (u) &= \sum_{m=1}^M c_{i,m}^j \phi_{m,T} (u), \, j=1,\ldots,|\mathcal{Z}| \\
    x_i (s) &= \sum_{p \in \mathbf{PA}^X} \int_0^s z_i^p(u) \; \beta^p(u,s) \; du \\
    y_i (t) &= \sum_{p \in \mathbf{PA}^Y} \int_0^t z_i^p(u) \; \beta^p(u,t) \; du + a' \int_0^t x_i(s) \; \beta^Y(s,t) \; ds \, , 
\end{aligned}
\end{equation}
 where $|\mathcal{Z}|$ is the cardinality of the set $\mathcal{Z}$, $\beta$'s are given by Equation~\eqref{eq:beta_fun}, and noise terms $\epsilon_i^{Z^j}(u)$, $\epsilon_i^X(s)$ and $\epsilon_i^Y(t)$, are added to the discrete observation values. 

We then apply Algorithm~\ref{alg:search_algorithm} to compute the statistics for our conditional independence test. Firstly, we find that the optimised regularisation strength $\lambda^*$ is robust and reproducible for different random realisations with the same sample size and model parameters. 
Consequently, we search for one optimised $\lambda^*$ (line 1--18 in Algorithm~\ref{alg:search_algorithm}) for each sample size, and fix this value for all 200 independent trials. 
%
Figure~\ref{fig:opt_lambs} summarises the results for $\lambda^*$ for $n \in \{100, 200, 300\}$ samples after conducting a grid search over a range of $L = 18$ possible values $10^{-4} \leq \lambda \leq 10^{-1}$, with $P = 1000$ permutations, and $B = 100$ rejection iterations.
Note that the range of values for $\lambda$ can be tuned in stages by the practitioner, e.g., starting with a coarse initial exploration followed by a more fine-grained range. We recommend to choose $B \geq 100$ for ease of comparison of the evaluation rejection rate to the acceptable false positive rate, $\alpha$.
We find that the optimised $\lambda^*$ exhibits a saturation as the number of conditional variables (the dimension of the conditional set) is increased. 
Given that we perform a kernel ridge regression, a larger number of samples $n$ should result in lower requirements for regularisation---which aligns with our observations of a decreasing $\lambda^*$ over the increase in number of samples $n$. 
Therefore, 
Algorithm~\ref{alg:search_algorithm} optimises the regularisation parameter $\lambda^*$ such that the evaluation rejection rate of the test is closest to the allowable false positive rate $\alpha = 0.05$.

\begin{figure}[htb!]
    \centering
    \includegraphics[width=0.45\linewidth]{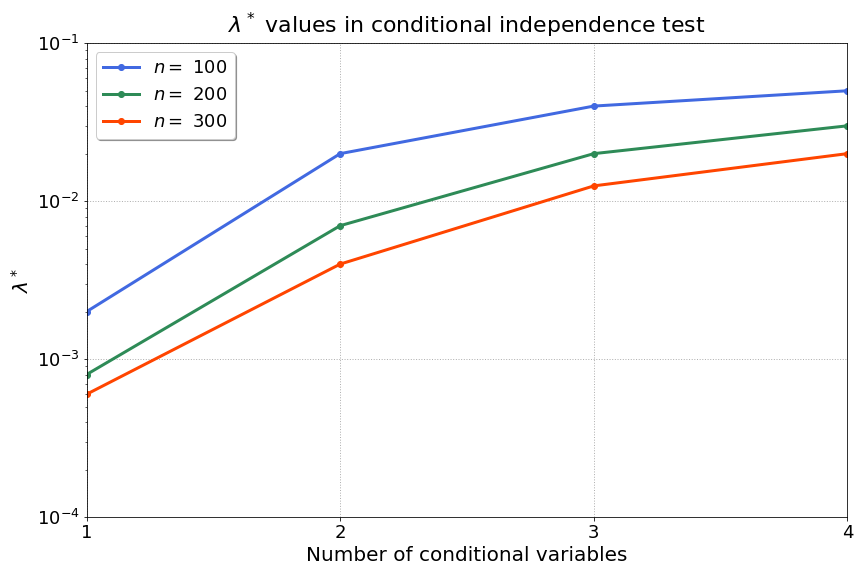}
    \caption{The optimised $\lambda^*$ for increasing dimension $d$ of the evaluated size of the conditional set and different sample sizes $n \in \{100, 200, 300\}$. 
    }
    \label{fig:opt_lambs}
\end{figure}

Figures~\ref{fig:cond_indep_1}--\ref{fig:cond_indep_4} show the results of the conditional independence test for increasing dimension $d$ of the conditional set (i.e., number of conditional variables).
We find that the test power is well preserved as $d$ increases through a concomitant increase of  $\lambda^*$ that partly ameliorates the ``curse of dimensionality'' \citep{shah2020hardness}.
Furthermore, the values of the test power for $a'=0$ 
correspond to the type-I error rates. 

\begin{figure}[htb!]
    \centering
    \begin{subfigure}{0.45\textwidth}
    \includegraphics[width=\linewidth]{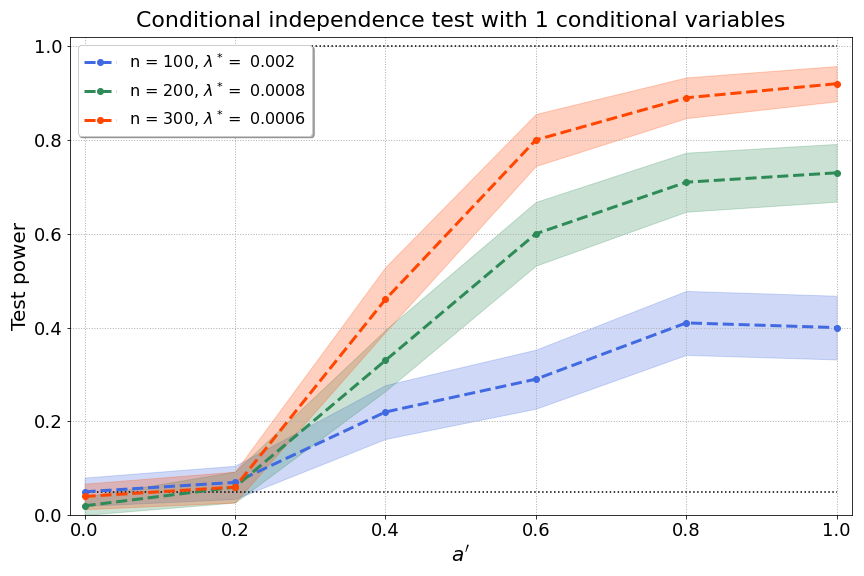}
    \caption{} 
    \label{fig:cond_indep_1}
    \end{subfigure}\hspace*{5ex}
    \begin{subfigure}{0.45\textwidth}
    \includegraphics[width=\linewidth]{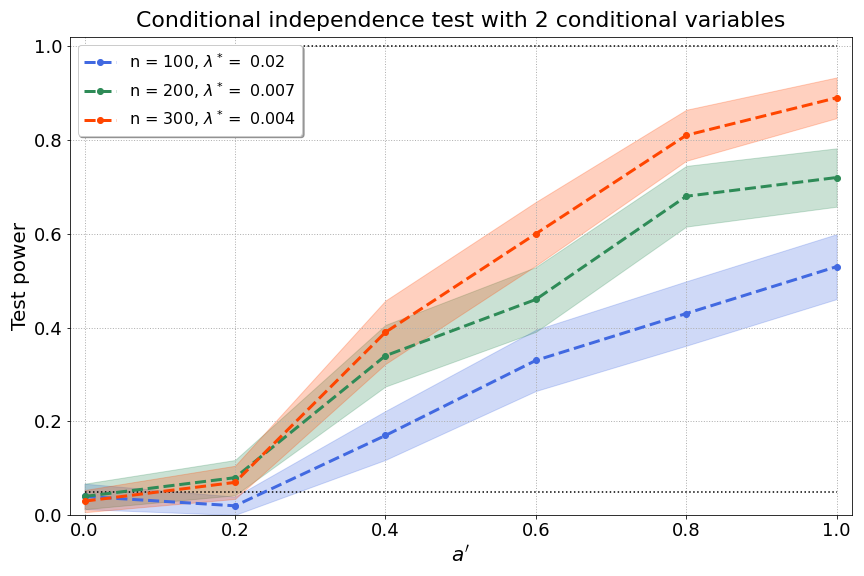}
    \caption{} 
    \label{fig:cond_indep_2}
    \end{subfigure}
    
    \medskip
    \begin{subfigure}{0.45\textwidth}
    \includegraphics[width=\linewidth]{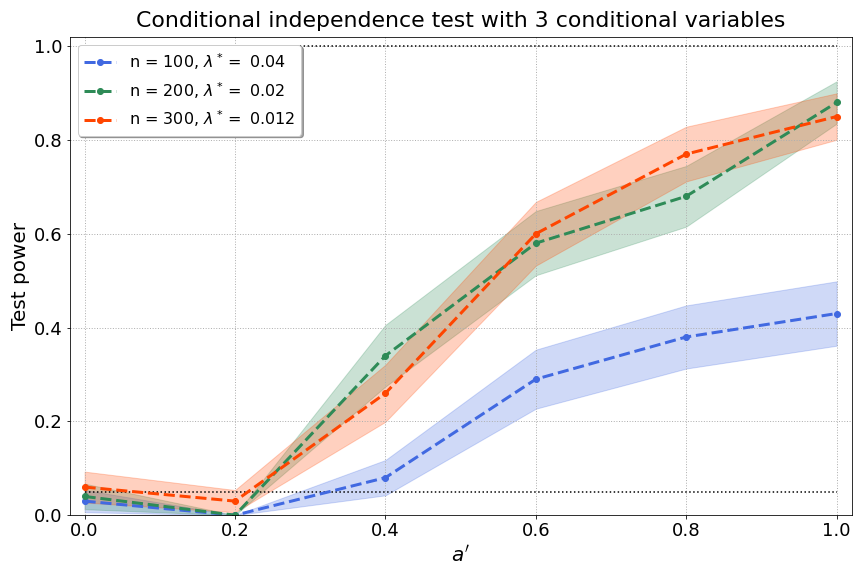}
    \caption{}
    \label{fig:cond_indep_3}
    \end{subfigure}\hspace*{5ex}
    \begin{subfigure}{0.45\textwidth}
    \includegraphics[width=\linewidth]{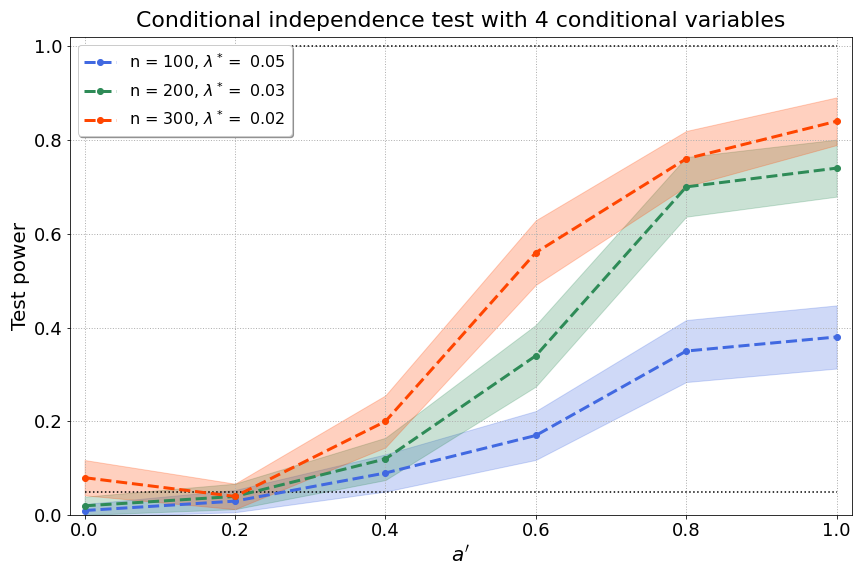}
    \caption{} 
    \label{fig:cond_indep_4}
    \end{subfigure}
    
    \caption{The test power of the conditional independence tests over various values for $a'$ and sample sizes $n$, from (a) 1 to (d) 4 conditional variables. The regularisation parameter $\lambda^*$ is given in the legend.}
    \label{fig:cond_indep}
\end{figure}

\subsection{Causal structure learning}
We use the three independence tests evaluated numerically in Section~\ref{sec:independence_tests} to learn different causal structures among function-valued random variables. 
We start with regression-based methods for the bivariate case $X \rightarrow Y$ and extend to the multivariate case through regression-, constraint-based, and a combination of constraint- and regression-based causal structure learning approaches.

\subsubsection{Synthetic data} \label{sec:causal_experiments}
\emph{Regression-based causal discovery.} \  
We start by evaluating synthetic data for a bivariate system $X \rightarrow Y$ generated according to Equations~\eqref{eq:func_data}--\eqref{eq:hist_data} with $a = 1$. We generate 200 independent trials, and we score the performance of the method using the \ac{shd} (see Appendix~\ref{app:defs_metrics} for its definition).  
As seen in Figure~\ref{fig:regression_2}, our regression-based algorithm using \acs{resit} outperforms two well-known algorithms for causal discovery: Granger-causality, which assumes linearity in the relationships \citep{granger1969investigating}, and \ac{ccm}, which allows for nonlinearity through a time-series embedding \citep{sugihara2012detecting}. 

We then evaluate how the regression-based approach performs in a system of three variables where data are generated according to random \acp{dag} with three nodes that entail historical nonlinear dependence. 
One of the above methods (\ac{ccm}) is commonly applied to bivariate problems only, hence we compare our method on a system with three variables to multivariate Granger causality \citep{geweke1982measurement} as well as to \ac{pcmci}, an algorithm where each time point is represented as a node in a graph~\citep{runge2019detecting} from which we extract a directed graph for the variables. 
Figure~\ref{fig:regression_3} shows substantially improved performance (lower \ac{shd}) of our regression-based algorithm with respect to both of these methods on 3-variable systems. 
Details of all the methods are given in Appendix~\ref{app:comparison_methods}.

\begin{figure}[htb!]
    \centering
    \begin{subfigure}{0.45\textwidth}
        \includegraphics[width=\linewidth]{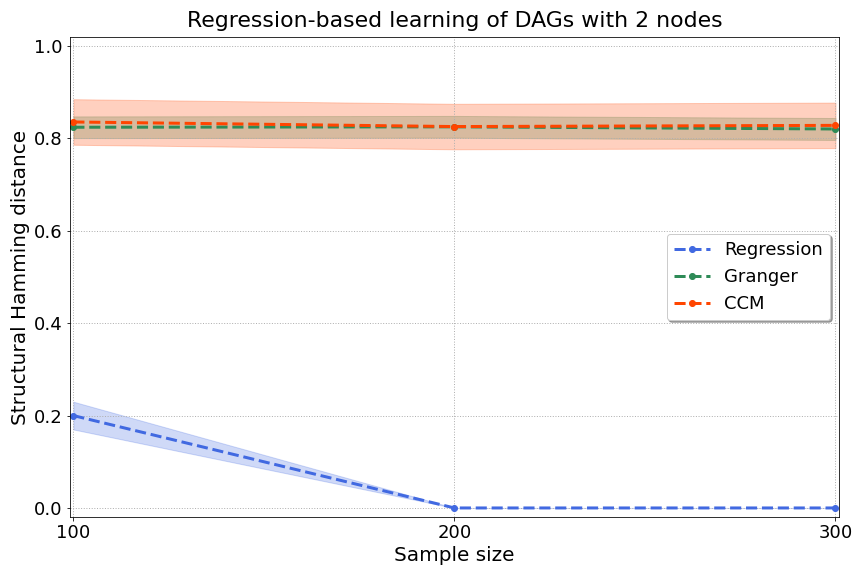}
        \caption{} \label{fig:regression_2}
    \end{subfigure}\hspace*{5ex}
    \begin{subfigure}{0.45\textwidth}
        \includegraphics[width=\linewidth]{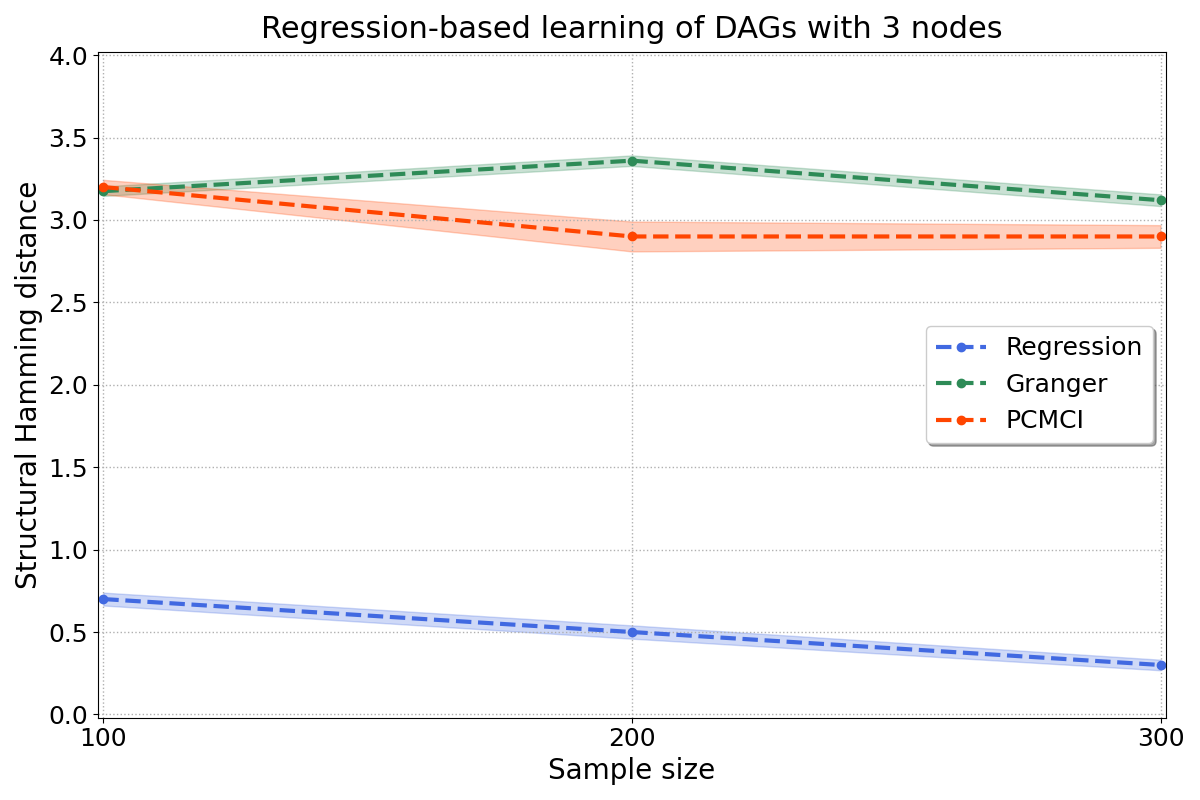}
        \caption{} \label{fig:regression_3}
    \end{subfigure}
    \caption{Accuracy of the regression-based causal discovery using kernel-based joint independence tests among residuals for (a) two and (b) three variables.  In the case of two variables (a), we compare our kernel-based method to Granger-causality and \ac{ccm}.
    In the case of three variables (b), we compare to multivariate Granger-causality} as well as using \ac{pcmci} to produce comparable results.
    Our method significantly outperforms both Granger-causality and \ac{ccm} in the bivariate setting, with just a few mistakes made for low sample numbers ($n=100$) and no mistakes for higher sample sizes.
    For three variables, our method is substantially more accurate than \ac{pcmci} and Granger-causality, with an average \ac{shd} of $0.3$ at $n = 300$ versus $2.9$ for \ac{pcmci} and $3.12$ for Granger-causality.
    The accuracy of the methods is measured in terms of the structural Hamming distance (\ac{shd}).
    \label{fig:shd_regression}
\end{figure}

\emph{Constraint-based causal discovery.} \
The accuracy of our constraint-based approach is evaluated by computing three metrics: normalised \ac{shd} ($\textsc{shd}_{\text{norm}}$), precision, and recall (see Appendix~\ref{app:defs_metrics} for definitions).
Figure~\ref{fig:constraint} shows the accuracy of our algorithm when applied to $d \in \{ 3, 4, 5, 6 \}$ variables.
The normalised \ac{shd} in Figure~\ref{fig:constraint-shd} demonstrates consistent accuracy of the constraint-based causal structure learning method across sample sizes for all $d$. 
To complement this measure, we can examine jointly precision and recall in Figure~\ref{fig:constraint-precision}--\ref{fig:constraint-recall}, where we find that the learnt edges are predominantly unoriented (low values of recall) but the oriented edges that are found are indeed correctly oriented (high values of precision). 

\begin{figure}[htb!]
    \centering
    \begin{subfigure}{0.45\textwidth}
    \includegraphics[width=\linewidth]{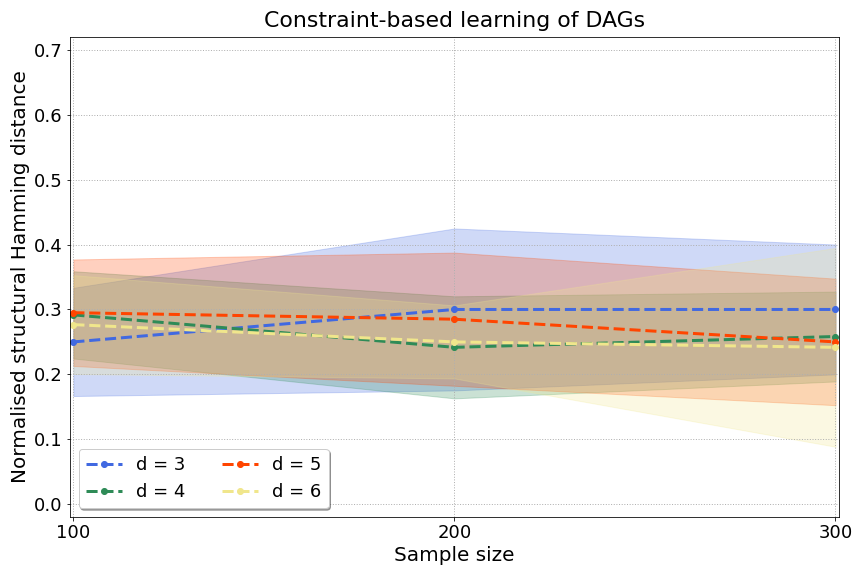}
    \caption{}
    \label{fig:constraint-shd}
    \end{subfigure}\hspace*{\fill}
    \begin{subfigure}{0.45\textwidth}
    \includegraphics[width=\linewidth]{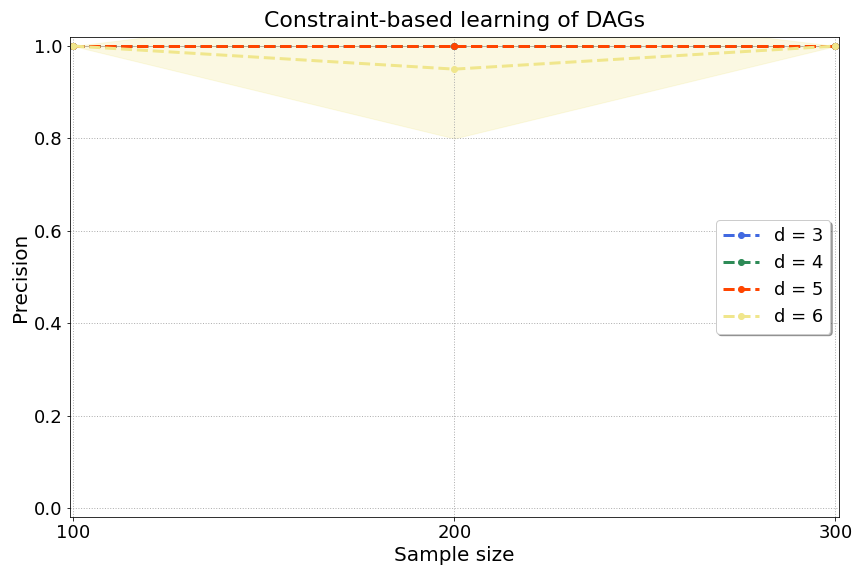}
    \caption{} 
    \label{fig:constraint-precision}
    \end{subfigure}\hspace*{\fill}
    \vspace{2ex}
    \newline
    \begin{subfigure}{0.45\textwidth}
    \includegraphics[width=\linewidth]{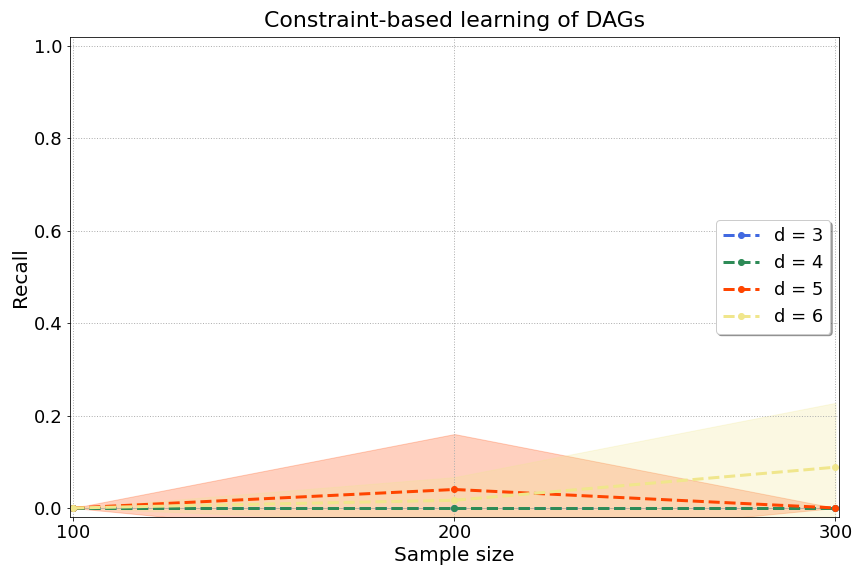}
    \caption{}
    \label{fig:constraint-recall}
    \end{subfigure}
    
    \caption{Constraint-based causal structure learning experiments. From left to right, we compute (a) the normalised \ac{shd}, (b) precision and (c) recall over $d \in \{3, 4, 5, 6 \}$ variables and $n \in \{100, 200, 300\}$ samples with $a = 1$ as the dependence between any two variables in the data that are connected by an edge in the true \ac{dag}.}
    \label{fig:constraint}
\end{figure}

\emph{Combined approach.} \ 
To scale up our causal discovery techniques more efficiently to larger graphs, it is possible to combine constraint- and regression-based causal learning, yielding \acp{dag}.
In this ``combined'' approach we start with a constraint-based step through which we learn \acp{cpdag} by applying conditional independence tests to any two variables conditioned on any subset of the remaining $d-2$ variables 
and orient edges according to the Algorithm in Appendix~\ref{app:alg+meek_rules}, which finds v-structures, applies Meek's orientation rules and returns the Markov equivalence class. 
Often, the Markov equivalence class entails undirected edges, but we can then take a second step using a regression-based approach, under a different set of assumptions, to orient edges by applying \acs{resit}. 
This two-step process yields a \ac{dag}, i.e., every edge in the graph is oriented, in a more scalable manner than applying directly regression-based causal discovery. 
We then measure the accuracy of our approach computing the normalised \ac{shd} \eqref{eq:shd_norm} as above.
Figure~\ref{fig:combined} shows that our results for $d=3,4,5,6$ variables compare favourably to \ac{pcmci} and to multivariate Granger causality with consistently lower \ac{shd} for all $d$.

\begin{figure}[htb!]
    \centering
    \includegraphics[width=0.45\linewidth]{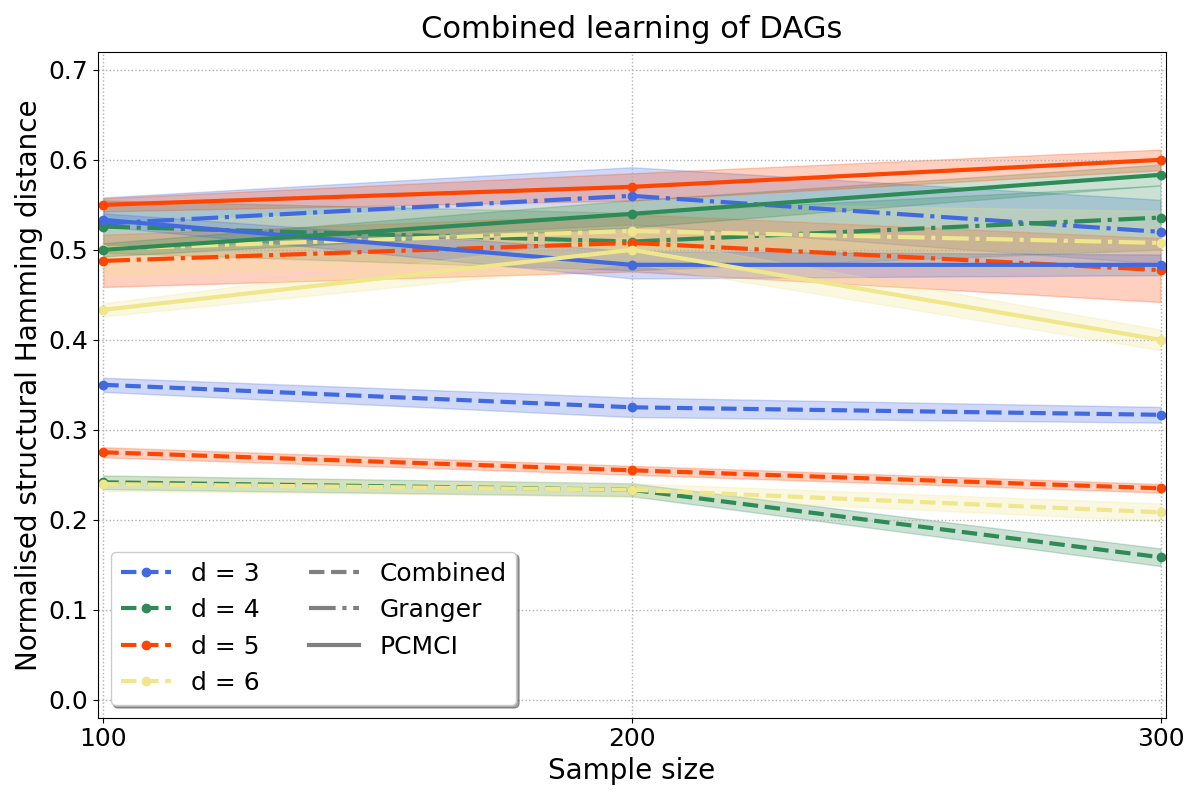}
    
    \caption{Causal structure learning with the ``combined'' approach, where we first apply the constraint-based method to find the Markov equivalence class, followed by the regression-based method to orient the undirected edges of the Markov equivalence class.
    We compute the normalised \ac{shd} over $d \in \{3, 4, 5, 6\}$ variables and $n \in \{100, 200, 300\}$ samples with $a = 1$ as in Equation~\eqref{eq:multi_hist_data}. 
    Our results \textit{(dashed lines)} have substantially lower  normalised \ac{shd} than those obtained from Granger-causality \textit{(dash-dotted lines)} and} \ac{pcmci} \textit{(solid lines)}, applied as described in Appendix~\ref{app:comparison_methods}.
    \label{fig:combined}
\end{figure}

\subsubsection{Real-world data} \label{sec:wgi_data}

To showcase the application of our methods to real data, we return now to the motivating example on socio-economic indicators mentioned in the Introduction (Section~\ref{sec:introduction}).
\citet{sulemana2018empirical} tested for the causal direction between \textit{corruption} and \textit{income inequality}, as measured by the Control of Corruption index \citep{kaufmann2011worldwide} and Gini coefficient \citep{gini1936measure}, respectively.  
They applied Granger-causality with Wald-$\chi^2$ tests, which statistically test the null hypothesis that an assumed explanatory variable does not have a (linear) influence on other considered variables.
Each variable, \textit{corruption} and \textit{income inequality}, was tested against being the explanatory variable (i.e., the cause) in their bivariate relationship. 
Their findings showed that \textit{income inequality} as the explanatory variable results in a p-value of $0.079$, whereas \textit{corruption} is more likely to be the cause with a p-value of $0.276$.
Analysing the same data using \ac{ccm}, which does not assume linearity, we also find the same cause-effect direction:  $70.9\%$ of the sample pairs $(x_i, y_i)$ validate corruption as the cause, against $29.7\%$ of pairs which detect the opposite direction.
However, using our proposed regression-based kernel approach, which does not assume linearity or stationarity,  we find the opposite result, i.e., the causal dependence of \textit{corruption} on \textit{income inequality} is statistically more likely (p $= 0.0859$) than the reverse direction (p $= 0.0130$) when applying \ac{resit} that rejects the null hypothesis of independence of the residuals after regression in the false causal directions. 

Going beyond pairwise interactions, we have also applied our causal structure learning method to the set of World Governance Indicators (\acp{wgi})~\citep{kaufmann2011worldwide}. 
The \acp{wgi} are a collection of six variables, which have been measured in 182 countries over 25 years from 1996 to 2020 (see Appendix~\ref{app:WGI}). 
We view the time-series of the countries as independent functional samples of each variable (so that $n=182$ and the functional dimension is time), and we apply our methods as described above to establish the causal structure amongst the six \acp{wgi}.  
In this case, we use the ``combined'' approach by successively applying our constraint- and regression-based causal structure learning methods. 
We first learn an undirected causal skeleton from the \acp{wgi} (Figure~\ref{fig:WGI_undirected}), and we find a triangle between Government Effectiveness (GE), Rule of Law (RL) and Control of Corruption (CC), and a separate link between Voice and Accountability (VA) and Regulatory Quality (RQ). 
We then orient these edges (Figure~\ref{fig:WGI_directed}) and find that Government Effectiveness (GE) causes both Rule of Law (RL) and Control of Corruption (CC), and Rule of Law (RL) causes Control of Corruption (CC). 
We also find that Voice and Accountability (VA) causes Regulatory Quality (RQ).

\begin{figure}[htb]
    \centering
    \begin{subfigure}{0.45\textwidth}
    \includegraphics[width=\linewidth]{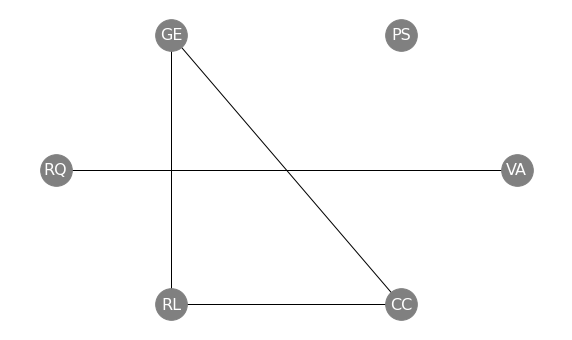}
    \caption{}\label{fig:WGI_undirected}
    \end{subfigure}\hspace*{5ex}
    \begin{subfigure}{0.45\textwidth}
    \includegraphics[width=\linewidth]{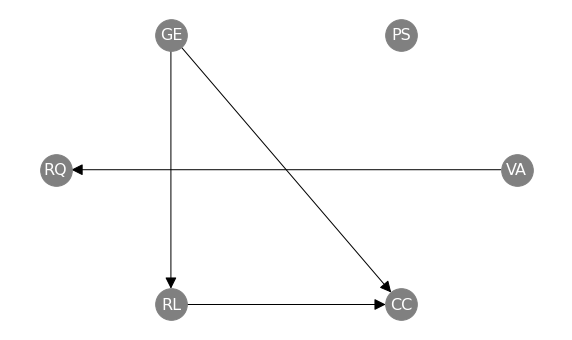}
    \caption{}\label{fig:WGI_directed}
    \end{subfigure}
    
    \caption{Undirected (a) and directed (b) causal networks on the World Governance Indicators data set to evaluate the results of the constraint-based approach alone (a) and the subsequent regression-based approach (b). 
    The labels are abbreviations of the official names: Voice and Accountability (VA), 
    Political Stability (PS), 
    Government Effectiveness (GE),
    Regulatory Quality (RQ), 
    Rule of Law (RL), and
    Control of Corruption (CC) (see Appendix~\ref{app:WGI}).}
    \label{fig:WGI_data}
\end{figure}

\section{Discussion and Conclusion}
\label{sec:discussion}

We present a causal structure learning framework on functional data that utilises kernel-based independence tests to extend the applicability of the widely used regression- and constraint-based approaches. 
The foundation of the framework originates from the functional data analysis literature by interpreting any discrete-measured observations of a random variable as finite-dimensional realisations of functions.

Using synthetic data, we have demonstrated that our regression-based approach outperforms existing methods such as Granger-causality, \ac{ccm} and \ac{pcmci} when learning causal relationships between two and three variables.
In the bivariate case, we have carried out a more detailed comparison to Granger-causality and \ac{ccm} to explore the robustness of our regression-based approach to nonlinearity and nonstationarity in the data (see Appendix~\ref{app:linear_stationary}) We find that
while Granger degrades under the introduction of nonlinearity in the data, and \ac{ccm} degrades under the introduction of nonstationarity, our method remains robust in its performance under both nonlinearity and nonstationarity. 
In addition, as seen in Figure~\ref{fig:regression_2}, \ac{ccm} can have difficulty in detecting strong unidirectional causal dependencies~\citep{sugihara2012detecting,ye2015distinguishing},
where the cause variable $X$  uniquely determines the state of the effect variable $Y$ inducing ``generalised synchrony'' \citep{rulkov1995generalized}. In such cases,  \ac{ccm} can predict samples of $Y$ from $X$ equally well as $X$ from $Y$; hence \ac{ccm} finds $X \rightarrow Y$ and $X \leftarrow Y$ indistinctly. In contrast, our experiments (Section~\ref{sec:causal_experiments} and Appendix~\ref{app:linear_stationary}) show that our regression-based method is unaffected by the presence of ``generalised synchrony'' in the data.

Further, we show that our conditional independence test, which is the cornerstone of the constraint-based causal discovery approach, achieves type-I error rates close to the acceptable false-positive rate $\alpha$ and high type-II error rates, even when the number of variables in the conditional set increases. 
\citet{shah2020hardness} rightly state that any conditional independence test can suffer from an undesirable test size or low test power in finite samples, and our method is obviously no exception.
However, Figure~\ref{fig:cond_indep} demonstrates the counterbalance of the ``curse of dimensionality'' through the optimised regularisation strength $\lambda^*$. 
Indeed, while with larger numbers of conditional variables, we would generally expect the test power to diminish, the optimisation of $\lambda^*$ offsets this reduction, 
resulting in no significant decrease in test power with a growing number of conditional variables.
Although our proposed method is computationally expensive and significantly benefits from high sample sizes, a suitable regularisation strength could be chosen for large conditional sets in principle.

Moreover, we demonstrate that constraint- and regression-based causal discovery methods can be combined to learn \acp{dag} with large number of nodes, which would otherwise be computationally very expensive when relying on regression-based methods to yield \acp{dag}.
By comparing Figure~\ref{fig:constraint-shd} and \ref{fig:combined} we see however that it does not necessarily result in a lower \ac{shd}. 
After learning the Markov equivalence class through the constraint-based approach, we apply \ac{resit} to orient undirected edges in the ``combined'' approach. 
Here, mistakes in the orientation of undirected edges add 2 to the \ac{shd} whereas an undirected edge only adds 1 to the \ac{shd}.
When applied to real-world data, we have utilised this approach to learn a causal graph of the \acp{wgi}, assuming each relationship between two variables is uni-directionally identifiable as suggested by several economic studies \citep{jong2005comparative,alesina2005corruption,dobson2010there}.

The presented work contributes to opening the field of causal discovery to functional data.
Beyond the results presented here, we believe that more research needs to be conducted to (i) increase the efficiency of the independence tests, meaning smaller sample sets can achieve higher test power; (ii) learn about upper bounds of the regularisation strength $\lambda^*$ with respect to the size of the sample and conditional sets; (iii) reduce the computational cost of the conditional independence test (see Appendices~\ref{app:complexity}~and~\ref{app:running_time} for a comparison to other methods); and (iv) establish connections and investigate differences to causal structure learning approaches based on transfer entropy \citep{barnett2009granger,porta2015conditional}.

\clearpage
\bibliography{references}

\begin{thebibliography}{}

\bibitem[Alesina and Angeletos, 2005]{alesina2005corruption}
Alesina, A. and Angeletos, G.-M. (2005).
\newblock Corruption, inequality, and fairness.
\newblock {\em Journal of Monetary Economics}, 52(7):1227--1244.

\bibitem[Barnett et~al., 2009]{barnett2009granger}
Barnett, L., Barrett, A.~B., and Seth, A.~K. (2009).
\newblock Granger causality and transfer entropy are equivalent for gaussian variables.
\newblock {\em Physical review letters}, 103(23):238701.

\bibitem[Berrett et~al., 2020]{berrett2020conditional}
Berrett, T.~B., Wang, Y., Barber, R.~F., and Samworth, R.~J. (2020).
\newblock The conditional permutation test for independence while controlling for confounders.
\newblock {\em Journal of the Royal Statistical Society: Series B (Statistical Methodology)}, 82(1):175--197.

\bibitem[B{\"u}hlmann et~al., 2014]{buhlmann2014cam}
B{\"u}hlmann, P., Peters, J., and Ernest, J. (2014).
\newblock Cam: Causal additive models, high-dimensional order search and penalized regression.

\bibitem[Chickering, 2002]{chickering2002optimal}
Chickering, D.~M. (2002).
\newblock Optimal structure identification with greedy search.
\newblock {\em Journal of machine learning research}, 3(Nov):507--554.

\bibitem[Dobson and Ramlogan-Dobson, 2010]{dobson2010there}
Dobson, S. and Ramlogan-Dobson, C. (2010).
\newblock Is there a trade-off between income inequality and corruption? evidence from latin america.
\newblock {\em Economics letters}, 107(2):102--104.

\bibitem[Doran et~al., 2014]{doran2014permutation}
Doran, G., Muandet, K., Zhang, K., and Sch{\"o}lkopf, B. (2014).
\newblock A permutation-based kernel conditional independence test.
\newblock pages 132--141.

\bibitem[Finkle et~al., 2018]{finkle2018windowed}
Finkle, J.~D., Wu, J.~J., and Bagheri, N. (2018).
\newblock Windowed granger causal inference strategy improves discovery of gene regulatory networks.
\newblock {\em Proceedings of the National Academy of Sciences}, 115(9):2252--2257.

\bibitem[Fukumizu et~al., 2007]{fukumizu2007kernel}
Fukumizu, K., Gretton, A., Sun, X., and Sch{\"o}lkopf, B. (2007).
\newblock Kernel measures of conditional dependence.
\newblock {\em Advances in neural information processing systems}, 20.

\bibitem[Geweke, 1982]{geweke1982measurement}
Geweke, J. (1982).
\newblock Measurement of linear dependence and feedback between multiple time series.
\newblock {\em Journal of the American statistical association}, 77(378):304--313.

\bibitem[Gini, 1936]{gini1936measure}
Gini, C. (1936).
\newblock On the measure of concentration with special reference to income and statistics.
\newblock {\em Colorado College Publication, General Series}, 208(1):73--79.

\bibitem[Girard et~al., 2006]{girard2006changes}
Girard, O., Lattier, G., Micallef, J.-P., and Millet, G.~P. (2006).
\newblock Changes in exercise characteristics, maximal voluntary contraction, and explosive strength during prolonged tennis playing.
\newblock {\em British journal of sports medicine}, 40(6):521--526.

\bibitem[Glymour et~al., 2019]{glymour2019review}
Glymour, C., Zhang, K., and Spirtes, P. (2019).
\newblock Review of causal discovery methods based on graphical models.
\newblock {\em Frontiers in genetics}, 10:524.

\bibitem[G{\'o}recki et~al., 2020]{gorecki2020independence}
G{\'o}recki, T., Krzy{\'s}ko, M., and Wo{\l}y{\'n}ski, W. (2020).
\newblock Independence test and canonical correlation analysis based on the alignment between kernel matrices for multivariate functional data.
\newblock {\em Artificial Intelligence Review}, 53(1):475--499.

\bibitem[Granger, 1969]{granger1969investigating}
Granger, C.~W. (1969).
\newblock Investigating causal relations by econometric models and cross-spectral methods.
\newblock {\em Econometrica: journal of the Econometric Society}, pages 424--438.

\bibitem[Gretton et~al., 2012]{gretton2012kernel}
Gretton, A., Borgwardt, K.~M., Rasch, M.~J., Sch{\"o}lkopf, B., and Smola, A. (2012).
\newblock A kernel two-sample test.
\newblock {\em The Journal of Machine Learning Research}, 13(1):723--773.

\bibitem[Gretton et~al., 2008]{gretton2008kernel}
Gretton, A., Fukumizu, K., Teo, C.~H., Song, L., Sch{\"o}lkopf, B., and Smola, A.~J. (2008).
\newblock A kernel statistical test of independence.
\newblock pages 585--592.

\bibitem[Hoyer et~al., 2008]{hoyer2008nonlinear}
Hoyer, P., Janzing, D., Mooij, J.~M., Peters, J., and Sch{\"o}lkopf, B. (2008).
\newblock Nonlinear causal discovery with additive noise models.
\newblock {\em Advances in neural information processing systems}, 21.

\bibitem[Javier, 2021]{Javier_causal-ccm_a_Python_2021}
Javier, P. J.~E. (2021).
\newblock {causal-ccm a Python implementation of Convergent Cross Mapping}.

\bibitem[Jong-Sung and Khagram, 2005]{jong2005comparative}
Jong-Sung, Y. and Khagram, S. (2005).
\newblock A comparative study of inequality and corruption.
\newblock {\em American sociological review}, 70(1):136--157.

\bibitem[Kaufmann et~al., 2011]{kaufmann2011worldwide}
Kaufmann, D., Kraay, A., and Mastruzzi, M. (2011).
\newblock The worldwide governance indicators: Methodology and analytical issues.
\newblock {\em Hague journal on the rule of law}, 3(2):220--246.

\bibitem[Lai et~al., 2021]{lai2021testing}
Lai, T., Zhang, Z., Wang, Y., and Kong, L. (2021).
\newblock Testing independence of functional variables by angle covariance.
\newblock {\em Journal of Multivariate Analysis}, 182:104711.

\bibitem[Lee and Honavar, 2017]{lee2017self}
Lee, S. and Honavar, V.~G. (2017).
\newblock Self-discrepancy conditional independence test.
\newblock 33.

\bibitem[Malfait and Ramsay, 2003]{malfait2003historical}
Malfait, N. and Ramsay, J.~O. (2003).
\newblock The historical functional linear model.
\newblock {\em Canadian Journal of Statistics}, 31(2):115--128.

\bibitem[Meek, 1995]{meek1995complete}
Meek, C. (1995).
\newblock {\em Complete orientation rules for patterns}.
\newblock Carnegie Mellon, Department of Philosophy.

\bibitem[Mooij et~al., 2016]{mooij2016distinguishing}
Mooij, J.~M., Peters, J., Janzing, D., Zscheischler, J., and Sch{\"o}lkopf, B. (2016).
\newblock Distinguishing cause from effect using observational data: methods and benchmarks.
\newblock {\em The Journal of Machine Learning Research}, 17(1):1103--1204.

\bibitem[Muandet et~al., 2017]{muandet2017kernel}
Muandet, K., Fukumizu, K., Sriperumbudur, B., Sch{\"o}lkopf, B., et~al. (2017).
\newblock Kernel mean embedding of distributions: A review and beyond.
\newblock {\em Foundations and Trends in Machine Learning}, 10(1-2):1--141.

\bibitem[Munch et~al., 2022]{teaspoon_2022}
Munch, E., Khasawneh, F., Myers, A., Yesilli, M., Tymochko, S., Barnes, D., Guzel, I., and Chumley, M. (2022).
\newblock teaspoon: Topological signal processing in python.

\bibitem[Park and Muandet, 2020]{park2020measure}
Park, J. and Muandet, K. (2020).
\newblock A measure-theoretic approach to kernel conditional mean embeddings.
\newblock {\em Advances in neural information processing systems}, 33:21247--21259.

\bibitem[Pearl, 2009]{pearl2009causality}
Pearl, J. (2009).
\newblock {\em Causality}.
\newblock Cambridge University Press.

\bibitem[Peters and B{\"u}hlmann, 2015]{peters2015structural}
Peters, J. and B{\"u}hlmann, P. (2015).
\newblock Structural intervention distance for evaluating causal graphs.
\newblock {\em Neural computation}, 27(3):771--799.

\bibitem[Peters et~al., 2017]{peters2017elements}
Peters, J., Janzing, D., and Sch{\"o}lkopf, B. (2017).
\newblock {\em Elements of causal inference}.
\newblock The MIT Press.

\bibitem[Peters et~al., 2012]{peters2012identifiability}
Peters, J., Mooij, J., Janzing, D., and Sch{\"o}lkopf, B. (2012).
\newblock Identifiability of causal graphs using functional models.
\newblock {\em arXiv preprint arXiv:1202.3757}.

\bibitem[Peters et~al., 2014]{peters2014causal}
Peters, J., Mooij, J.~M., Janzing, D., and Sch{\"o}lkopf, B. (2014).
\newblock Causal discovery with continuous additive noise models.
\newblock {\em The Journal of Machine Learning Research}, 15(1):2009--2053.

\bibitem[Pfister et~al., 2018]{pfister2018kernel}
Pfister, N., B{\"u}hlmann, P., Sch{\"o}lkopf, B., and Peters, J. (2018).
\newblock Kernel-based tests for joint independence.
\newblock {\em Journal of the Royal Statistical Society: Series B (Statistical Methodology)}, 80(1):5--31.

\bibitem[Porta et~al., 2015]{porta2015conditional}
Porta, A., Faes, L., Nollo, G., Bari, V., Marchi, A., De~Maria, B., Takahashi, A.~C., and Catai, A.~M. (2015).
\newblock Conditional self-entropy and conditional joint transfer entropy in heart period variability during graded postural challenge.
\newblock {\em PLoS One}, 10(7):e0132851.

\bibitem[Ramos-Carre{\~n}o et~al., 2019]{ramos2019scikit}
Ramos-Carre{\~n}o, C., Torrecilla, J.~L., and Su{\'a}rez, A. (2019).
\newblock scikit-fda: A python package for functional data analysis.
\newblock {\em Different Varimax Rotation Approaches of Functional PCA for the evolution of COVID-19 pandemic in Spain..........}, page~55.

\bibitem[Ramsay and Hooker, 2017]{ramsay2017dynamic}
Ramsay, J. and Hooker, G. (2017).
\newblock {\em Dynamic data analysis}.
\newblock Springer.

\bibitem[Ramsay, 2005]{ramsay2005}
Ramsay, J.~O. (2005).
\newblock {\em Functional data analysis}.
\newblock Springer series in statistics. Springer, New York, 2nd ed. edition.

\bibitem[Rulkov et~al., 1995]{rulkov1995generalized}
Rulkov, N.~F., Sushchik, M.~M., Tsimring, L.~S., and Abarbanel, H.~D. (1995).
\newblock Generalized synchronization of chaos in directionally coupled chaotic systems.
\newblock {\em Physical Review E}, 51(2):980.

\bibitem[Runge et~al., 2019a]{runge2019inferring}
Runge, J., Bathiany, S., Bollt, E., Camps-Valls, G., Coumou, D., Deyle, E., Glymour, C., Kretschmer, M., Mahecha, M.~D., Mu{\~n}oz-Mar{\'\i}, J., et~al. (2019a).
\newblock Inferring causation from time series in earth system sciences.
\newblock {\em Nature communications}, 10(1):1--13.

\bibitem[Runge et~al., 2019b]{runge2019detecting}
Runge, J., Nowack, P., Kretschmer, M., Flaxman, S., and Sejdinovic, D. (2019b).
\newblock Detecting and quantifying causal associations in large nonlinear time series datasets.
\newblock {\em Science advances}, 5(11).

\bibitem[Sch{\"o}lkopf and von K{\"u}gelgen, 2022]{scholkopf2022statistical}
Sch{\"o}lkopf, B. and von K{\"u}gelgen, J. (2022).
\newblock From statistical to causal learning.
\newblock {\em arXiv preprint arXiv:2204.00607}.

\bibitem[Seabold and Perktold, 2010]{seabold2010statsmodels}
Seabold, S. and Perktold, J. (2010).
\newblock statsmodels: Econometric and statistical modeling with python.

\bibitem[Shah et~al., 2020]{shah2020hardness}
Shah, R.~D., Peters, J., et~al. (2020).
\newblock The hardness of conditional independence testing and the generalised covariance measure.
\newblock {\em Annals of Statistics}, 48(3):1514--1538.

\bibitem[Shimizu et~al., 2006]{shimizu2006linear}
Shimizu, S., Hoyer, P.~O., Hyv{\"a}rinen, A., Kerminen, A., and Jordan, M. (2006).
\newblock A linear non-gaussian acyclic model for causal discovery.
\newblock {\em Journal of Machine Learning Research}, 7(10).

\bibitem[Spirtes et~al., 2000]{spirtes2000causation}
Spirtes, P., Glymour, C.~N., Scheines, R., and Heckerman, D. (2000).
\newblock {\em Causation, prediction, and search}.
\newblock MIT press.

\bibitem[Squires, 2018]{squires2018causaldag}
Squires, C. (2018).
\newblock {causaldag: creation, manipulation, and learning of causal models}.
\newblock \url{https://github.com/uhlerlab/causaldag}.

\bibitem[Squires and Uhler, 2022]{squires2022causal}
Squires, C. and Uhler, C. (2022).
\newblock Causal structure learning: a combinatorial perspective.
\newblock {\em Foundations of Computational Mathematics}, pages 1--35.

\bibitem[Sugihara et~al., 2012]{sugihara2012detecting}
Sugihara, G., May, R., Ye, H., Hsieh, C.-h., Deyle, E., Fogarty, M., and Munch, S. (2012).
\newblock Detecting causality in complex ecosystems.
\newblock {\em Science}, 338(6106):496--500.

\bibitem[Sulemana and Kpienbaareh, 2018]{sulemana2018empirical}
Sulemana, I. and Kpienbaareh, D. (2018).
\newblock {An empirical examination of the relationship between income inequality and corruption in Africa}.
\newblock {\em Economic Analysis and Policy}, 60:27--42.

\bibitem[Szab{\'o} and Sriperumbudur, 2017]{szabo2017characteristic}
Szab{\'o}, Z. and Sriperumbudur, B.~K. (2017).
\newblock Characteristic and universal tensor product kernels.
\newblock {\em J. Mach. Learn. Res.}, 18(233):1--29.

\bibitem[Sz{\'e}kely et~al., 2007]{szekely2007measuring}
Sz{\'e}kely, G.~J., Rizzo, M.~L., and Bakirov, N.~K. (2007).
\newblock Measuring and testing dependence by correlation of distances.

\bibitem[Van~der Vaart, 2000]{van2000asymptotic}
Van~der Vaart, A.~W. (2000).
\newblock {\em Asymptotic statistics}, volume~3.
\newblock Cambridge University Press.

\bibitem[Vowels et~al., 2022]{vowels2022d}
Vowels, M.~J., Camgoz, N.~C., and Bowden, R. (2022).
\newblock D’ya like dags? a survey on structure learning and causal discovery.
\newblock {\em ACM Computing Surveys}, 55(4):1--36.

\bibitem[{World Bank}, 2022]{worldbank2022}
{World Bank} (2022).
\newblock Gini index.
\newblock \url{https://data.worldbank.org/indicator/SI.POV.GINI}.
\newblock Accessed 2023-06-07.

\bibitem[Wynne and Duncan, 2022]{wynne2022kernel}
Wynne, G. and Duncan, A.~B. (2022).
\newblock A kernel two-sample test for functional data.
\newblock {\em Journal of Machine Learning Research}, 23(73):1--51.

\bibitem[Ye et~al., 2015]{ye2015distinguishing}
Ye, H., Deyle, E.~R., Gilarranz, L.~J., and Sugihara, G. (2015).
\newblock Distinguishing time-delayed causal interactions using convergent cross mapping.
\newblock {\em Scientific reports}, 5(1):14750.

\bibitem[Zhang et~al., 2012]{zhang2012kernel}
Zhang, K., Peters, J., Janzing, D., and Sch{\"o}lkopf, B. (2012).
\newblock Kernel-based conditional independence test and application in causal discovery.
\newblock {\em arXiv preprint arXiv:1202.3775}.

\bibitem[Zhu et~al., 2006]{zhu2006comparison}
Zhu, Y., Kuhn, T., Mayo, P., and Hinds, W.~C. (2006).
\newblock Comparison of daytime and nighttime concentration profiles and size distributions of ultrafine particles near a major highway.
\newblock {\em Environmental science \& technology}, 40(8):2531--2536.

\end{thebibliography}

\newpage
\appendix
\appendix





\section{Proof of Theorem~\ref{th:hscic_consistent}}\label{app:proof}

\begin{proof}
		All norms \(\lVert\cdot\rVert\) and inner products \(\langle\cdot,\cdot\rangle\) in this proof are taken with respect to the RKHS in which the normand and the arguments of the inner product reside in. 
		
		Denote by \(\text{HSCIC}(\mathbf{X},\mathbf{Y},\mathbf{Z})\) the \textit{true} Hilbert-Schmidt conditional independence criterion between the random variables \(\mathbf{X}\) and \(\mathbf{Y}\) given \(\mathbf{Z}\), i.e. 
		\[\text{HSCIC}(\mathbf{X},\mathbf{Y},\mathbf{Z})=\lVert\mu_{\mathbf{X}\mathbf{Y}|\mathbf{Z}}-\mu_{\mathbf{X}|\mathbf{Z}}\otimes\mu_{\mathbf{Y}|\mathbf{Z}}\rVert,\]
		and denote by \(\hat{\text{HSCIC}}(\mathbf{X},\mathbf{Y},\mathbf{Z},\lambda_n)\) the empirical estimate of \(\text{HSCIC}(\mathbf{X},\mathbf{Y},\mathbf{Z})\) based on samples, i.e.
		\[\hat{\text{HSCIC}}(\mathbf{X},\mathbf{Y},\mathbf{Z},\lambda_n)=\lVert\hat{\mu}_{\mathbf{X}\mathbf{Y}|\mathbf{Z}}-\hat{\mu}_{\mathbf{X}|\mathbf{Z}}\otimes\hat{\mu}_{\mathbf{Y}|\mathbf{Z}}\rVert,\]
		where \(\hat{\mu}_{\mathbf{X}\mathbf{Y}|\mathbf{Z}}\), \(\hat{\mu}_{\mathbf{X}|\mathbf{Z}}\) and \(\hat{\mu}_{\mathbf{Y}|\mathbf{Z}}\) are the empirical estimates of the conditional mean embeddings \(\mu_{\mathbf{X}\mathbf{Y}|\mathbf{Z}}\), \(\mu_{\mathbf{X}|\mathbf{Z}}\) and \(\mu_{\mathbf{Y}|\mathbf{Z}}\) obtained with the regularisation parameter \(\lambda_n\). 
		
		Note that by the reverse triangle inequality, followed by the ordinary triangle inequality, we have
		\begin{alignat*}{2}
			\left\lvert\text{HSCIC}(\mathbf{X},\mathbf{Y},\mathbf{Z})-\hat{\text{HSCIC}}(\mathbf{X},\mathbf{Y},\mathbf{Z},\lambda_n)\right\rvert&=\left\lvert\lVert\mu_{\mathbf{X}\mathbf{Y}|\mathbf{Z}}-\mu_{\mathbf{X}|\mathbf{Z}}\otimes\mu_{\mathbf{Y}|\mathbf{Z}}\rVert-\lVert\hat{\mu}_{\mathbf{X}\mathbf{Y}|\mathbf{Z}}-\hat{\mu}_{\mathbf{X}|\mathbf{Z}}\otimes\hat{\mu}_{\mathbf{Y}|\mathbf{Z}}\rVert\right\rvert\\
			&\leq\lVert\mu_{\mathbf{X}\mathbf{Y}|\mathbf{Z}}-\mu_{\mathbf{X}|\mathbf{Z}}\otimes\mu_{\mathbf{Y}|\mathbf{Z}}-\hat{\mu}_{\mathbf{X}\mathbf{Y}|\mathbf{Z}}+\hat{\mu}_{\mathbf{X}|\mathbf{Z}}\otimes\hat{\mu}_{\mathbf{Y}|\mathbf{Z}}\rVert\\
			&\leq\lVert\mu_{\mathbf{X}\mathbf{Y}|\mathbf{Z}}-\hat{\mu}_{\mathbf{X}\mathbf{Y}|\mathbf{Z}}\rVert+\lVert\mu_{\mathbf{X}|\mathbf{Z}}\otimes\mu_{\mathbf{Y}|\mathbf{Z}}-\hat{\mu}_{\mathbf{X}|\mathbf{Z}}\otimes\hat{\mu}_{\mathbf{Y}|\mathbf{Z}}\rVert.
		\end{alignat*}
		Here,
		\begin{alignat*}{2}
			\lVert\mu_{\mathbf{X}|\mathbf{Z}}\otimes\mu_{\mathbf{Y}|\mathbf{Z}}-\hat{\mu}_{\mathbf{X}|\mathbf{Z}}\otimes\hat{\mu}_{\mathbf{Y}|\mathbf{Z}}\rVert^2&=\langle\mu_{\mathbf{X}|\mathbf{Z}}\otimes\mu_{\mathbf{Y}|\mathbf{Z}},\mu_{\mathbf{X}|\mathbf{Z}}\otimes\mu_{\mathbf{Y}|\mathbf{Z}}\rangle\\
			&\quad-2\langle\mu_{\mathbf{X}|\mathbf{Z}}\otimes\mu_{\mathbf{Y}|\mathbf{Z}},\hat{\mu}_{\mathbf{X}|\mathbf{Z}}\otimes\hat{\mu}_{\mathbf{Y}|\mathbf{Z}}\rangle\\
			&\qquad+\langle\hat{\mu}_{\mathbf{X}|\mathbf{Z}}\otimes\hat{\mu}_{\mathbf{Y}|\mathbf{Z}},\hat{\mu}_{\mathbf{X}|\mathbf{Z}}\otimes\hat{\mu}_{\mathbf{Y}|\mathbf{Z}}\rangle\\
			&=\langle\mu_{\mathbf{X}|\mathbf{Z}},\mu_{\mathbf{X}|\mathbf{Z}}\rangle\langle\mu_{\mathbf{Y}|\mathbf{Z}},\mu_{\mathbf{Y}|\mathbf{Z}}\rangle\\
			&\quad-2\langle\mu_{\mathbf{X}|\mathbf{Z}},\hat{\mu}_{\mathbf{X}|\mathbf{Z}}\rangle\langle\mu_{\mathbf{Y}|\mathbf{Z}},\hat{\mu}_{\mathbf{Y}|\mathbf{Z}}\rangle\\
			&\qquad+\langle\hat{\mu}_{\mathbf{X}|\mathbf{Z}},\hat{\mu}_{\mathbf{X}|\mathbf{Z}}\rangle\langle\hat{\mu}_{\mathbf{Y}|\mathbf{Z}},\hat{\mu}_{\mathbf{Y}|\mathbf{Z}}\rangle\\
			&=\langle\mu_{\mathbf{X}|\mathbf{Z}},\mu_{\mathbf{X}|\mathbf{Z}}\rangle\langle\mu_{\mathbf{Y}|\mathbf{Z}},\mu_{\mathbf{Y}|\mathbf{Z}}\rangle\\
			&\quad-2\langle\mu_{\mathbf{X}|\mathbf{Z}},\mu_{\mathbf{X}|\mathbf{Z}}\rangle\langle\mu_{\mathbf{Y}|\mathbf{Z}},\hat{\mu}_{\mathbf{Y}|\mathbf{Z}}\rangle+2\langle\mu_{\mathbf{X}|\mathbf{Z}},\mu_{\mathbf{X}|\mathbf{Z}}-\hat{\mu}_{\mathbf{X}|\mathbf{Z}}\rangle\langle\mu_{\mathbf{Y}|\mathbf{Z}},\hat{\mu}_{\mathbf{Y}|\mathbf{Z}}\rangle\\
			&\qquad+\langle\mu_{\mathbf{X}|\mathbf{Z}},\mu_{\mathbf{X}|\mathbf{Z}}\rangle\langle\hat{\mu}_{\mathbf{Y}|\mathbf{Z}},\hat{\mu}_{\mathbf{Y}|\mathbf{Z}}\rangle \\
            &\qquad+\langle\hat{\mu}_{\mathbf{X}|\mathbf{Z}}-\mu_{\mathbf{X}|\mathbf{Z}},\hat{\mu}_{\mathbf{X}|\mathbf{Z}}+\mu_{\mathbf{X}|\mathbf{Z}}\rangle\langle\hat{\mu}_{\mathbf{Y}|\mathbf{Z}},\hat{\mu}_{\mathbf{Y}|\mathbf{Z}}\rangle\\
			&=\langle\mu_{\mathbf{X}|\mathbf{Z}},\mu_{\mathbf{X}|\mathbf{Z}}\rangle\lVert\mu_{\mathbf{Y}|\mathbf{Z}}-\hat{\mu}_{\mathbf{Y}|\mathbf{Z}}\rVert^2+2\langle\mu_{\mathbf{X}|\mathbf{Z}},\mu_{\mathbf{X}|\mathbf{Z}}-\hat{\mu}_{\mathbf{X}|\mathbf{Z}}\rangle\langle\mu_{\mathbf{Y}|\mathbf{Z}},\hat{\mu}_{\mathbf{Y}|\mathbf{Z}}\rangle\\
			&\qquad+\langle\hat{\mu}_{\mathbf{X}|\mathbf{Z}}-\mu_{\mathbf{X}|\mathbf{Z}},\hat{\mu}_{\mathbf{X}|\mathbf{Z}}+\mu_{\mathbf{X}|\mathbf{Z}}\rangle\langle\hat{\mu}_{\mathbf{Y}|\mathbf{Z}},\hat{\mu}_{\mathbf{Y}|\mathbf{Z}}\rangle\\
			&\leq\lVert\mu_{\mathbf{X}|\mathbf{Z}}\rVert^2\lVert\mu_{\mathbf{Y}|\mathbf{Z}}-\hat{\mu}_{\mathbf{Y}|\mathbf{Z}}\rVert^2\\
			&\quad+\lVert\mu_{\mathbf{X}|\mathbf{Z}}-\hat{\mu}_{\mathbf{X}|\mathbf{Z}}\rVert\left(2\lVert\mu_{\mathbf{X}|\mathbf{Z}}\rVert\langle\mu_{\mathbf{Y}|\mathbf{Z}},\hat{\mu}_{\mathbf{Y}|\mathbf{Z}}\rangle+\lVert\hat{\mu}_{\mathbf{X}|\mathbf{Z}}+\mu_{\mathbf{X}|\mathbf{Z}}\rVert\lVert\hat{\mu}_{\mathbf{Y}|\mathbf{Z}}\rVert\right),
		\end{alignat*}
		where we used the Cauchy-Schwarz inequality in the last inequality. We note that \citet[Theorem 4.4]{park2020measure} ensures that each of \(\lVert\mu_{\mathbf{X}\mathbf{Y}|\mathbf{Z}}-\hat{\mu}_{\mathbf{X}\mathbf{Y}|\mathbf{Z}}\rVert\), \(\lVert\mu_{\mathbf{Y}|\mathbf{Z}}-\hat{\mu}_{\mathbf{Y}|\mathbf{Z}}\rVert\) and \(\lVert\mu_{\mathbf{X}|\mathbf{Z}}-\hat{\mu}_{\mathbf{X}|\mathbf{Z}}\rVert\) converge to 0 in probability, hence \(\hat{\text{HSCIC}}(\mathbf{X},\mathbf{Y},\mathbf{Z},\lambda_n)\rightarrow\text{HSCIC}(\mathbf{X},\mathbf{Y},\mathbf{Z})\) in probability with respect to the \(L_2\)-norm.
		
		Now we show that the empirical average of the empirical HSCIC converges to the population integral of the population HSCIC. Writing \(\hat{\text{HSCIC}}\) as a shorthand for \(\hat{\text{HSCIC}}(\mathbf{X},\mathbf{Y},\mathbf{Z},\lambda_n)\), we see that
		\begin{alignat*}{2}
			\left\lvert\frac{1}{n}\sum^n_{i=1}\hat{\text{HSCIC}}(z_i)-\int\text{HSCIC}(z)d\mathbb{P}(z)\right\rvert&\leq\left\lvert\frac{1}{n}\sum^n_{i=1}\hat{\text{HSCIC}}(z_i)-\int\hat{\text{HSCIC}}(z)d\mathbb{P}(z)\right\rvert\\
			&\qquad+\left\lvert\int\hat{\text{HSCIC}}(z)d\mathbb{P}(z)-\int\text{HSCIC}(z)d\mathbb{P}(z)\right\rvert\\
			&\leq\sup_{f\in\mathcal{F}}\left\lvert\frac{1}{n}\sum^n_{i=1}f(z_i)-\int f(z)d\mathbb{P}(z)\right\rvert\\
			&\qquad+\left\lvert\int\hat{\text{HSCIC}}(z)d\mathbb{P}(z)-\int\text{HSCIC}(z)d\mathbb{P}(z)\right\rvert.
		\end{alignat*} 
		Here, the first term converges to 0 thanks to the uniform law of large numbers over the function class \(\mathcal{F}\) in which the empirical estimation of \(\hat{\text{HSCIC}}\) is carried out, and the second term converges to 0 as shown above. 
		
		Now, by \citet[Theorem 5.4]{park2020measure}, since our kernel \(k_1\otimes k_2\) is characteristic, we have that \(\text{HSCIC}(\mathbf{X},\mathbf{Y},\mathbf{Z})\) is the identically zero function if and only if \(\mathbf{X}\perp\mathbf{Y}\mid\mathbf{Z}\). Hence, if the alternative hypothesis \(H_1\) holds, we have
		\[\int\text{HSCIC}(z)d\mathbb{P}(z)>0.\]
		See that, for any \(r>0\), we have
		\begin{alignat*}{2}
			\lim_{n\rightarrow\infty}\mathbb{P}\left(\frac{1}{\sqrt{n}}\sum^n_{i=1}\hat{\text{HSCIC}}(z_i)\geq r\right)&=1-\lim_{n\rightarrow\infty}\mathbb{P}\left(\frac{1}{\sqrt{n}}\sum^n_{i=1}\hat{\text{HSCIC}}(z_i)<r\right)\\
			&=1-\lim_{n\rightarrow\infty}\mathbb{P}\left(\frac{1}{n}\sum^n_{i=1}\hat{\text{HSCIC}}(z_i)-\frac{r}{\sqrt{n}}<0\right)\\
			&=1-\lim_{n\rightarrow\infty}\mathbb{P}\left(\int\text{HSCIC}(z)d\mathbb{P}(z)<0\right)\\
			&=1,
		\end{alignat*}
		where we used Portmanteau's theorem \citep[p.6, Lemma 2.2]{van2000asymptotic} to get from the second line to the third, using the convergence of \(\frac{1}{n}\sum^n_{i=1}\hat{\text{HSCIC}}(z_i)\) to \(\int\text{HSCIC}(z)d\mathbb{P}(z)\) shown above. Hence, our test is consistent. 
\end{proof}

\section{Additional definitions for causal learning} \label{app:defs}
\subsection*{d-separation} 
If $X, Y$ and $Z$ are three pairwise disjoint (sets of) nodes in a \ac{dag} $G$, then $Z$ d-separates $X$ from $Y$ in $G$ if $Z$ blocks every possible path between (any node in) $X$ and (any node in) $Y$ in $G$ \citep[\S~1.2.3]{pearl2009causality}. 
We then write $X \indep_{G} Y| Z$. The symbol $\indep_{G}$ is commonly used to denote d-separation.

\subsection*{Faithfulness assumption} 
If two random variables are (conditionally) independent in the observed distribution $\P$, then they are d-separated in the underlying causal graph $G$ \citep[Definition 6.33]{peters2017elements}.

\subsection*{Causal Markov assumption} 
The distribution $\P_{XYZ}$ is Markov with respect to a \ac{dag} $G$ if $X \indep_{G} Y | Z \implies X \indep Y | Z$ for all disjoint (sets of) nodes $X, Y, Z$, where $\indep_{G}$ denotes d-separation \citep[Definition 6.21]{peters2017elements}.

\section{Meek's orientation rules, the SGS and the PC algorithms}\label{app:alg+meek_rules}
In Figure~\ref{fig:meek_rules}, a node is placed at each end of the edges. 
Rule 1 states that if one directed edge (here the vertical edge) points towards another undirected edge (the horizontal edge) that is not further connected to any other edges, the undirected edge must point in the same direction as the adjacent edge, because the three vertices would otherwise form a v-structure (which are only formed with conditional dependence).
Rule 2, in contrast, points the undirected edge in both adjacent vertices' direction because it would violate the acyclicity of a \ac{dag} otherwise. 
Rule 3 and 4 avoid new v-structures which would come into existence by applying Rule 2 twice if the oriented edges pointed in the opposite direction. These new v-structures would be in the left top corners of the rectangles. 

\begin{figure}[htb]
    \centering
    \includegraphics[width=.9\textwidth]{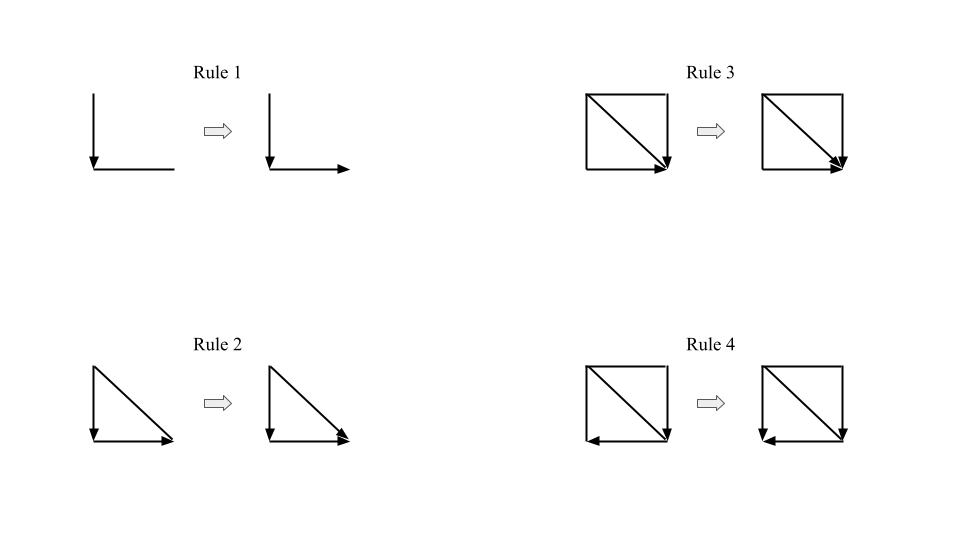}
    \caption{Meek rules to orient edges that remain in the graph after conditional independence tests and edges are oriented based on detected colliders. Proofs that edges cannot be oriented in the opposite direction without violating the acyclicity of the graph and found conditional independencies are given in \citet{meek1995complete}.}
    \label{fig:meek_rules}
\end{figure}

Based on Meek's orientation rules, the SGS algorithm tests each pair of variables as follows \citep[\S~5.4.1]{spirtes2000causation}:

\begin{outline}[enumerate]
 \1 Form the complete undirected graph $G$ from the set of vertices (or nodes) $\mathbf{V}$.
 \1 For each pair of vertices $X$ and $Y$, if there exists a subset $S$ of $\mathbf{V} \setminus \{X, Y\}$ such that $X$ and $Y$ are d-separated, i.e., conditionally independent (causal Markov condition), given $Z$, remove the edge between $X$ and $Y$ from $G$.
 \1 Let $K$ be the undirected graph resulting from the previous step 2. For each triple of vertices $X$, $Y$, and $Z$ such that the pair $X$ and $Y$ and the pair $Y$ and $Z$ are each adjacent in $K$ (written as $X - Y - Z$) but the pair $X$ and $Z$ are not adjacent in $K$, orient $X - Y - Z$ as $X \rightarrow Y \leftarrow Z$ if and only if there is no subset $S$ of $Y$ that d-separates $X$ and $Z$.
 \1 Repeat the following steps until no more edges can be oriented:
   \2 If $X \rightarrow Y$, $Y$ and $Z$ are adjacent, $X$ and $Z$ are not adjacent, and there is no arrowhead of other vertices at $Y$, then orient $Y - Z$ as $Y \rightarrow Z$ (Rule 1 of Meek's orientation rules).
   \2 If there is a directed path over some other vertices from $X$ to $Y$, and an edge between $X$ and $Y$, then orient $X - Y$ as $X \rightarrow Y$ (Rule 2 of Meek's orientation rules).
\end{outline}

The large computational cost of applying the SGS algorithm (especially step 2) to data, due to the large number of combinations of variables as the potential conditional sets, opened the door for computationally more efficient methods. 
The PC algorithm \citep[\S~5.4.2]{spirtes2000causation} is amongst the most popular alternatives and minimises the computational cost by searching for conditional independence in a structured manner, as opposed to step 2 of the SGS algorithm that iterates over all possible conditional subsets to any two variables $A$ and $B$. 
Starting again with a fully connected undirected graph, the PC algorithm begins by testing for marginal independence between any two variables $A$ and $B$ and deletes the edge connecting $A$ and $B$, $A - B$, if $A \indep B$. 
Then, it proceeds with testing for $A \indep B | C$ and erases $A - B$ if one conditional variable $C$ is found that makes $A$ and $B$ independent given $C$. 
Afterwards, the conditional set is extended to two variables, $\{C, D\}$, and the edge $A - B$ is deleted if this conditional set makes $A$ and $B$ independent. 
The conditional set is extended, round after round, until no more conditional independencies are found, resulting in the sparsified graph $K$. Step 3 and 4 are then pursued as in the SGS algorithm.

\section{Definitions of performance metrics} \label{app:defs_metrics}

\subsection*{\ac{shd}} For two graphs, $G_1$ and $G_2$, \ac{shd} is the sum of the element-wise absolute differences between the adjacency matrices $A_1$ and $A_2$ of $G_1$ and $G_2$, respectively:
\begin{equation} \label{eq:shd}
    \ac{shd} = \sum_{ij} | A_1^{ij} - A_2^{ij} | \; .
\end{equation}
Thus, when a learnt causal graph $G_1$ includes an edge $X \rightarrow Y$ oriented opposite to the corresponding edge in the true graph $G_2$  (i.e., $X \leftarrow Y$ is true),  we add 2 to the \ac{shd} score (``double penalty''). 
Note that others \citep[e.g.,][]{peters2015structural} only penalise by 1 for a learnt edge being directed opposite to the true edge.

\subsection*{Normalised \ac{shd}} We also define the normalised \ac{shd}, where we divide \ac{shd} by the number of possible directed edges in a graph of size $d$, thus allowing comparison across graphs with different number of nodes:
\begin{equation}\label{eq:shd_norm}
\textsc{shd}_{\text{norm}} = \frac{\textsc{shd}}{d (d-1)} \; .
\end{equation}

\subsection*{Precision} Let the total number of edges in a learnt graph $G_1$ be the sum of truly and falsely oriented edges, $e = e_t + e_f$ (the true-positives and false-positives, respectively).
The truly oriented edges are the edges that correspond to the zero elements in the matrix containing the element-wise differences between the learnt $A_1$ and the true $A_2$. 
In contrast, the falsely oriented edges are the elements having a value of 2 (``double penalty'').
Precision is the proportion of correctly oriented edges $e_t$ in a learnt graph $G_1$ in comparison to a true graph $G_2$ out of all learnt edges $e$ in $G_1$:
\begin{equation}\label{eq:precision}
    \text{Precision} = \frac{e_t}{e} \; .
\end{equation}

\subsection*{Recall} We separate between oriented and unoriented edges, $e_o$ and $e_u$, respectively, which can again be summed up to the total number of edges, $e = e_o + e_u$.
Recall is the fraction of edges in the \ac{cpdag} that are oriented:
\begin{equation}
    \text{Recall} = \frac{e_o}{e} \; .
\end{equation}

\section{Other causal methods used for comparison} \label{app:comparison_methods}

\subsection*{Granger-causality} We implement Granger-causality through software provided by \citet{seabold2010statsmodels} for $d=2$ and by \citet{runge2019detecting} for $d>2$, and define the optimal lag as the first local minimum in the mutual information function over measurement points. 
Given that Granger-causality is applied to a pair of samples, $(x_i, y_i)$, we iterate over every pair and determine the percentage of pairs that result in $X \rightarrow Y$ against $X \leftarrow Y$. 
We follow a similar process for more than two variables.

\subsection*{\ac{ccm}} We implement \ac{ccm} through software provided by \citet{Javier_causal-ccm_a_Python_2021} and define the optimal lag as for Granger-causality. 
The embedding dimensionality is determined by a false nearest neighbour method \citep{teaspoon_2022}. 
As with Granger-causality, \ac{ccm} is also applied to a pair of samples, $(x_i, y_i)$. We take the same summarising approach and iterate over every pair and determine the percentage of of pairs that result in $X \rightarrow Y$ against $X \leftarrow Y$.
\subsection*{\ac{pcmci}} \citet{runge2019detecting} proposed \ac{pcmci}, a method that produces \emph{time-series graphs} where each node represents a variable at a certain state in time.
In our experiments, we estimate the maximum possible lag as the average of the first local minimum of mutual information of all considered data series. 
To test for the presence of an edge, we apply distance correlation-based independence tests \citep{szekely2007measuring} between the residuals of $X$ and $Y$ that remain after regressing out the influence of the remaining nodes in the graph through Gaussian processes. 
The distance correlation coefficients of the significant edges between two variables are summed up to measure the prevalence of a causal direction of edges connecting these two variables.  
The causal direction with the greater sum is  considered to be the directed edge, and assessed against the edge between the same two variables in the true \ac{dag}.

\section{Computational complexity of our proposed methods}
\label{app:complexity}
\subsection*{Regression-based causal discovery} 
One needs to iterate over all possible candidate \acp{dag} to check the possibility that it was hte causal graph that generated the observed data. The total number of candidate \acp{dag} $\kappa_d$ of a graph with $d$ nodes is given by
\begin{equation}
    \kappa_d = \sum_{i=1}^d (-1)^{i+1} \binom{d}{i} 2^{i (d-i)} \kappa_{d-i} \; .
\end{equation}

\subsection*{Constraint-based approach} 
One needs to exhaustively test for conditional independence between any two nodes of the graph given any possible subset of the remaining nodes. Including the empty conditional set, this results in $2^{d-2}$ conditional independence tests for every pair of nodes in the graph. Then, the total number of conditional independence tests is:
\begin{equation}
    \eta_d = \binom{d}{2} 2^{d-2} \; .
\end{equation}

\section{Comparison of running times of our proposed method and the other causal methods} \label{app:running_time}

Table~\ref{tab:running_time} shows the running times of the proposed and comparison methods for synthetic data in Section~\ref{sec:experiments}.
Furthermore, for the bivariate real-world socioeconomic data set of \textit{corruption} and \textit{income inequality} in Section~\ref{sec:wgi_data},
Granger-causality completes the computation in 731 ms, \ac{ccm} in 537 ms, and the regression-based approach in 19.2 s (according to our Python implementations).

\begin{table}[htb!]
\footnotesize
    \centering
    \begin{tabular}{|l| l l l | l l l|}
         \hline
         $d$ & Regression-based & Constraint-based & Combined & Granger causality & CCM & PCMCI \\
         \hline
         2 & 1 min 57 s & -- & -- & 998 ms  & 1.57 s & -- \\
         3 & 30 min 35 s & 1 min 10 s & 3 min 13 s & 43 s & -- & 45.5 s \\
         4 & -- & 8 min 10 s & 8 min 1 s & 1 min 23 s & -- & 1 min 29 s \\
         5 & -- & 34 min 33 s & 39 min 8 s & 2 min 44 s & -- & 2 min 50 s \\
         6 & -- & 1 h 31 min 36 s & 1 h 43 min 54 s & 3 min 20 s & -- & 3 min 27 s \\ \hline
    \end{tabular}
    \caption{Running times for one trial with 100 samples for our three proposed methods (regression, constraint, combined) and the three comparison methods (Granger, \ac{ccm}, \ac{pcmci})   over various graph sizes (number of variables $d \in \{2, 3, 4\}$). The dash ``--'' implies that the method was not applied in our experiments in Section~\ref{sec:experiments}.}
    \label{tab:running_time}
\end{table}

\section{Influence of nonlinearity and nonstationarity on Granger causality and \ac{ccm}} \label{app:linear_stationary}
\subsection*{Nonlinearity and Granger causality:}
To examine the increased robustness of our proposed algorithm to nonlinearity, we define a bivariate system with $X$ as the cause and $Y$ as the effect under the assumption of causal sufficiency, i.e., $X \rightarrow Y$.
We consider $n$ functional samples $\{x_i(s)\}_{i=1}^n$ of $X$ defined over the interval $s = [0, 1]$ generated  as described in Equation \eqref{eq:func_data},
%
%
and the samples $\{y_i(t)\}_{i=1}^n$ of $Y$ defined over $t = [0, 1]$ by: 
\begin{equation}
    y_i (t) = (1 - r) \ \int_{0}^t x_i(s) \; ds + r \ \int_{0}^t x_i(s) \; \beta(s,t) \; ds \;.
\end{equation}
%
By sweeping $r \in [0,1]$, we go from linear dependence between $x_i(s)$ and $y_i(t)$ at $r = 0$ to nonlinear dependence at $r=1$, where we recover the model in Equation~\eqref{eq:hist_data} with $a=1$. 
We thus expect high accuracy from Granger-causality at $r=0$, where there is stationarity and linearity~\citep{granger1969investigating}.
As $r$ increases, we expect that our regression-based approach will achieve more accurate results, and the results of Granger will degrade as the assumption of linearity will not be met.
As expected, Figure~\ref{fig:linearity_granger} shows  the improved prediction accuracy (lower \ac{shd}) of our regression-based approach as $r$ is increased, reaching
\ac{shd} of $0.425 \pm 0.1542$ (versus $0.8 \pm 0.092$  
for Granger-causality) at $r=1$.

\begin{figure}[htb!]
    \centering
    \includegraphics[width=0.45\linewidth]{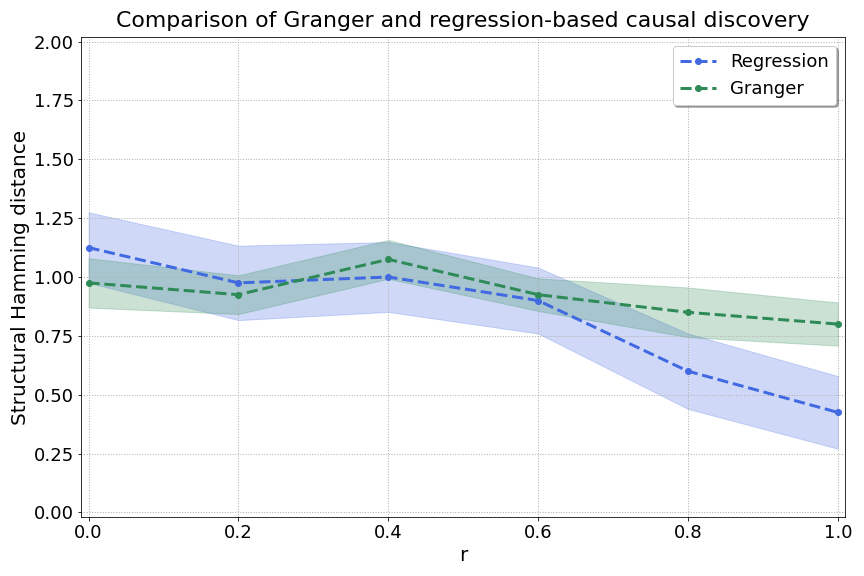}
    \caption{Comparison between Granger-causality and our regression-based method for $n = 100$ and various values for $r$ which influence the level of nonlinearity in the dependence from $X$ to $Y$.}
    \label{fig:linearity_granger}
\end{figure}

\subsection*{Nonstationarity and \ac{ccm}:}
To evaluate the importance of nonstationarity in causal learning, we compare our regression-based method against \ac{ccm}~\citep{sugihara2012detecting}, 
as a nonstationary component is introduced.
We generate samples 
from a bivariate coupled nonlinear logistic map as defined in \citet{sugihara2012detecting} such that $X$ influences $Y$ more strongly than vice-versa and we add nonstationary trends $\nu_X (t)$ and $\nu_Y (t)$:
\begin{align}
    x_i(t+1) &= x_i(t) \ \left[ 0.8 - 3.5 x_i(t) - 0.02 y_i(t) \right] + r \ \nu_X (t)  \\
    y_i(t+1) &= y_i(t) \ \left[ 0.2 - 3.2 y_i(t) - 0.1 x_i(t) \right] + r \ \nu_Y (t) \;,
\end{align}
where
$\nu_X (t) = \tanh \left (c_\nu t - \frac{c_\nu}{2} \right)$
with $c_\nu \sim \mathcal{N}(8, 1)$, 
and similarly for $\nu_Y (t)$.
The factor $r \in [0, 1]$ controls the weight of the nonstationarity of the time-series as it is increased.

As shown in Figure~\ref{fig:stationarity_ccm}, \ac{ccm} is only able to capture the true causal direction $X \rightarrow Y$ for $r = 0$, as expected, but \ac{ccm} loses its detection power
as soon as the data become nonstationary ($r>0$).  
In contrast, the regression-based maintains its detection power as the nonstationarity is increased.

\begin{figure}[htb!]
    \centering
    \includegraphics[width=0.45\linewidth]{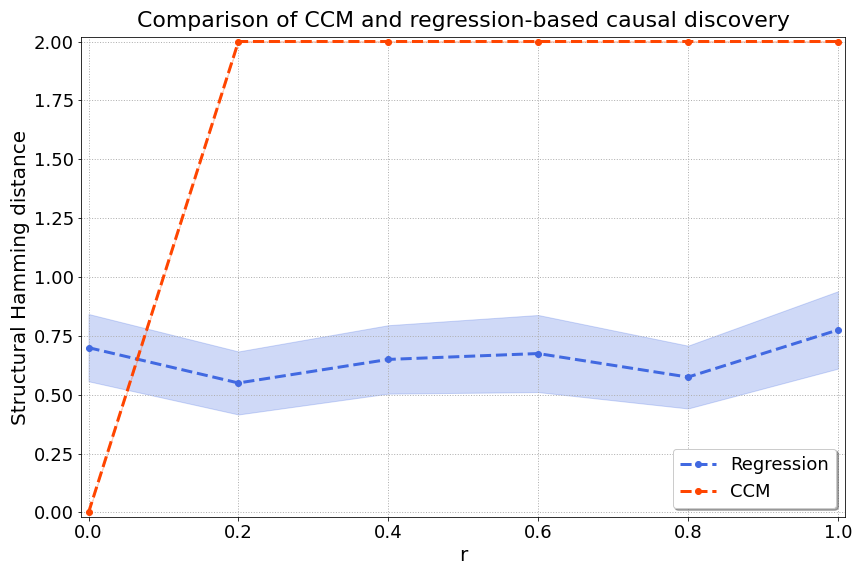}
    \caption{Comparison between \ac{ccm} and our regression-based method for $n = 100$ and various values for $r$ which influence the level of nonstationarity in the data for $X$ and $Y$.}
    \label{fig:stationarity_ccm}
\end{figure}


    
\newpage
\section{World Governance Indicators}\label{app:WGI}

In Figure~\ref{fig:WGI_data}, we use the official names, as mentioned in Table~\ref{tab:WGI_names}, where the Descriptions are taken from \url{https://info.worldbank.org/governance/wgi/}.
The data set consists of six variables (normalised to a range from 0 to 100), each of which is a summary of multiple indicators measured over 25 years from 1996 to 2020 in 182 countries, as described in~\citet[\S 4]{kaufmann2011worldwide}.

\begingroup
\renewcommand{\arraystretch}{1.5}
\begin{table}[h]
\newcolumntype{L}{>{\raggedright\arraybackslash}X}
\newcolumntype{l}{>{\raggedright\arraybackslash\hsize=.5\hsize}X}
    \centering
    \begin{tabularx}{\textwidth}{l l L}
         \toprule
         Abbreviation & Official Name & Description  \\ 
         \midrule
         VA & Voice and Accountability & Voice and accountability captures perceptions of the extent to which a country's citizens are able to participate in selecting their government, as well as freedom of expression, freedom of association, and a free media. \\
         PS & Political Stability and No Violence & Political Stability and Absence of Violence/Terrorism measures perceptions of the likelihood of political instability and/or politically-motivated violence, including terrorism. \\
         GE & Government Effectiveness & Government effectiveness captures perceptions of the quality of public services, the quality of the civil service and the degree of its independence from political pressures, the quality of policy formulation and implementation, and the credibility of the government's commitment to such policies.  \\
         RQ & Regulatory Quality & Regulatory quality captures perceptions of the ability of the government to formulate and implement sound policies and regulations that permit and promote private sector development. \\
         RL & Rule of Law & Rule of law captures perceptions of the extent to which agents have confidence in and abide by the rules of society, and in particular the quality of contract enforcement, property rights, the police, and the courts,  as well as the likelihood of crime and violence. \\
         CC & Control of Corruption & Control of corruption captures perceptions of the extent to which public power is exercised for private gain, including both petty and grand forms of corruption, as well as "capture" of the state by elites and private interests. \\ 
         \bottomrule
    \end{tabularx}
    \vspace{8pt}
    \caption{Abbreviations, official names and descriptions of the variables in the World Governance Indicators data set.}
    \label{tab:WGI_names}
\end{table}
\endgroup

\end{document}